\newcommand{\xmark}{\ding{55}}%
\definecolor{Red}{HTML}{ff0000}
\definecolor{OrangeRed}{HTML}{FF4500}
\definecolor{FireBrick}{HTML}{B22222}
\tikzset{arrowfill/.style={top color=OrangeRed!20, bottom color=Red, general shadow={fill=black, shadow yshift=-0.8ex, path fading=arrowfading}}}
\tikzset{arrowstyle/.style={draw=FireBrick,arrowfill, single arrow,minimum height=#1, single arrow,
single arrow head extend=.3cm,}}
\tikzstyle{opaque} = [cloud, draw, cloud puffs=10,cloud puff arc=120, aspect=2, inner ysep=0.5em]
\tikzstyle{parse} = [rounded corners=6.75pt, draw, align=center, inner sep=0.175cm]
\tikzstyle{parseinit} = [rounded corners=1.375pt, draw, inner sep=0.05cm]
\tikzstyle{parseop} = [draw, align=center, inner sep=0.175cm]
\newtheorem{lem}{Lemma}
\DeclareMathAlphabet{\mathpzc}{OT1}{pzc}{m}{it}
\tikzstyle{boxtext} = [above, right]
\tikzstyle{data} = [inner sep=0.18cm, draw, align=center, rounded corners]
\tikzstyle{source} = [draw, align=center, dotted]
\tikzstyle{op} = [align=center,inner sep=0.18cm, draw]
\newcommand{\St}{\ensuremath{\mathcal{S}}}
\newcommand{\fu}{\ensuremath{\mathpzc{f}}}
\newcommand{\op}{\ensuremath{\mathpzc{h}}}
\newcommand{\operation}{\ensuremath{\text{op}}}
\newcommand*\Let[2]{\State #1 $\gets$ #2}
\begin{document}

\date{}

\sloppy
\title{\Large \bf Where's Crypto?: Automated Identification and Classification of Proprietary Cryptographic Primitives in Binary Code} 

\author{
{\rm Carlo Meijer}\\
Institute for Computing and Information Sciences\\
Radboud University Nijmegen\\
cmeijer@cs.ru.nl
\and
{\rm Veelasha Moonsamy}\\
Institute for Computing and Information Sciences\\
Radboud University Nijmegen\\
email@veelasha.org
\and
{\rm Jos Wetzels}\\
Midnight Blue Labs\\
j.wetzels@midnightbluelabs.com
} 

\maketitle

\begin{abstract}
The continuing use of proprietary cryptography in embedded systems across many industry verticals, from physical access control systems and telecommunications to machine-to-machine authentication, presents a significant obstacle to black-box security-evaluation efforts. In-depth security analysis requires locating and classifying the algorithm in often very large binary images, thus rendering manual inspection, even when aided by heuristics, time consuming.

In this paper, we present a novel approach to automate the identification and classification of (proprietary) cryptographic primitives within binary code. Our approach is based on Data Flow Graph (DFG) isomorphism, previously proposed by Lestringant et al.~\cite{lestringant2015automated}. Unfortunately, their DFG isomorphism approach is limited to known primitives only, and relies on heuristics for selecting code fragments for analysis. By combining the said approach with symbolic execution, we overcome all limitations of~\cite{lestringant2015automated}, and are able to extend the analysis into the domain of unknown, proprietary cryptographic primitives.
To demonstrate that our proposal is practical, we develop various signatures, each targeted at a distinct class of cryptographic primitives, and present experimental evaluations for each of them on a set of binaries, both publicly available (and thus providing reproducible results), and proprietary ones.
Lastly, we provide a free and open-source implementation of our approach, called \emph{Where's Crypto?}, in the form of a plug-in for the popular IDA disassembler.

\end{abstract}

\section{Introduction}

%

Despite the widely-held academic consensus that cryptography should be publicly documented\cite{gutmann2003,verdult2015security,ker_1883_militaire}, the use of proprietary cryptography has persisted across many industry verticals ranging from physical access control systems~\cite{anderson2006,wouters2019,verdult2015security,strobel2013fuming,weiner2013security,verstegen2018press} and telecommunications~\cite{driessen2012don,nohl2010cryptanalysis,etsi300} to machine-to-machine authentication~\cite{verdult2015security,bokslagassessment}.


This situation presents a significant obstacle to security-evaluation efforts part of certification, compliance, secure procurement or individual research since it requires resorting to highly labor-intensive reverse-engineering in order to determine the presence and nature of these algorithms before they can be evaluated. In addition, when a proprietary algorithm gets broken, details might not be published immediately as a result of NDAs or court injunctions~\cite{bbc2013car} leaving other potentially affected parties to repeat such expensive efforts and hampering effective vulnerability management. As such, there is a real need for practical solutions to automatically scan binaries for the presence of as-of-yet unknown cryptographic algorithms.

\vspace{-3mm}
\paragraph{Criteria}\label{sec:criteria}
In order to support the analysis of closed-source embedded systems for the use of proprietary cryptography, a suitable solution should meet the following criteria: (i) identification of as-of-yet unknown cryptographic algorithms falling within relevant taxonomical classes, (ii) efficient support of large, real-world embedded firmware binaries, and (iii) no reliance on full firmware emulation or dynamic instrumentation due to issues around platform heterogeneity and peripheral emulation. As discussed in Section~\ref{sec:related}, there is no prior work meeting 
all of these criteria.


\vspace{-3mm}
\paragraph{Approach}
To meet the above criteria, our approach bases itself on a structural taxonomy of cryptographic primitives. The idea is that, since the vast majority of proprietary cryptography falls within established primitive classes~\cite{verdult2015security}, we can develop structural signatures allowing for the identification of any algorithm within these classes without having to rely on knowledge of the algorithm's particularites. To this end, we utilize a taxonomy based on~\cite{menezes1996handbook,avanzi2016salad,keliher2003linear,manifavas2016survey} and illustrated in Figure~\ref{fig:taxonomy}. Note that this taxonomy is purely instrumental and does not intend to be exhaustive or allow for an exclusive partitioning of algorithms. 

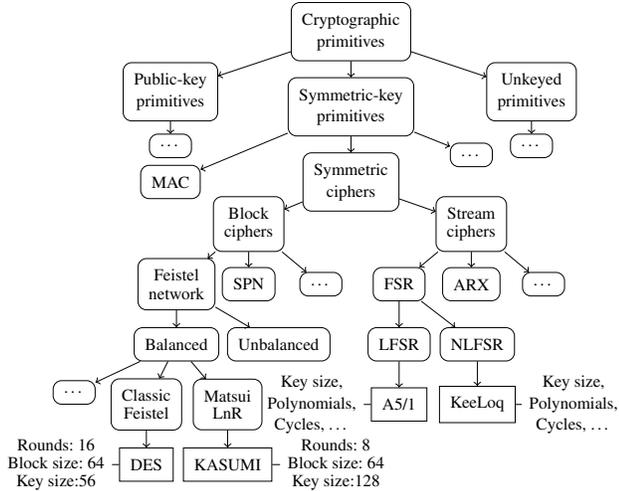
\begin{figure}[!h]
\centering
\vspace*{-0.25cm}
\scalebox{0.8}{
\begin{tikzpicture}[auto]
\node[data] (main) {\footnotesize \shortstack{Cryptographic\\primitives}};
\path (main)+(-3,-1) node[data] (public) {\footnotesize \shortstack{Public-key\\primitives}};
\path (main)+(3,-1) node[data] (unkeyed) {\footnotesize \shortstack{Unkeyed\\primitives}};
\path (main)+(0,-1.25) node[data] (symmetric) {\footnotesize \shortstack{Symmetric-key\\primitives}};
\path[->] (main) edge[] node [] {} (public);
\path[->] (main) edge[] node [] {} (unkeyed);
\path[->] (main) edge[] node [] {} (symmetric);

\path (public)+(0,-0.9) node[data] (pkdots) {\footnotesize \dots};
\path[->] (public) edge[] node [] {} (pkdots);

\path (unkeyed)+(0,-0.9) node[data] (unkdots) {\footnotesize \dots};
\path[->] (unkeyed) edge[] node [] {} (unkdots);

\path (symmetric)+(-3,-1.25) node[data] (mac) {\footnotesize MAC};
\path (symmetric)+(2,-0.8) node[data] (symdots) {\footnotesize \dots};
\path (symmetric)+(0,-1.25) node[data] (ciphers) {\footnotesize \shortstack{Symmetric\\ciphers}};
\path[->] (symmetric) edge[] node [] {} (mac);
\path[->] (symmetric) edge[] node [] {} (ciphers);
\path[->] (symmetric) edge[] node [] {} (symdots);

\path (ciphers)+(-1.7,-0.7) node[data] (block) {\footnotesize \shortstack{Block\\ciphers}};
\path (ciphers)+(2,-0.7) node[data] (stream) {\footnotesize \shortstack{Stream\\ciphers}};
\path[->] (ciphers) edge[] node [] {} (block);
\path[->] (ciphers) edge[] node [] {} (stream);

\path (block)+(-1.2,-1) node[data] (feistel) {\footnotesize \shortstack{Feistel\\network}};
\path (block)+(0,-1) node[data] (spn) {\footnotesize SPN};
\path (block)+(1.2,-1) node[data] (blockdots) {\footnotesize \dots};
\path[->] (block) edge[] node [] {} (feistel);
\path[->] (block) edge[] node [] {} (spn);
\path[->] (block) edge[] node [] {} (blockdots);

\path (feistel)+(0,-1) node[data] (balanced) {\footnotesize Balanced};
\path (feistel)+(1.7,-1) node[data] (unbalanced) {\footnotesize Unbalanced};
\path[->] (feistel) edge[] node [] {} (balanced);
\path[->] (feistel) edge[] node [] {} (unbalanced);

\path (balanced)+(-0.5,-1) node[data] (classic) {\footnotesize \shortstack{Classic\\Feistel}};
\path (balanced)+(0.85,-1) node[data] (matsui) {\footnotesize \shortstack{Matsui\\LnR}};
\path (balanced)+(-1.7,-0.8) node[data] (balanceddots) {\footnotesize \dots};
\path[->] (balanced) edge[] node [] {} (classic);
\path[->] (balanced) edge[] node [] {} (matsui);
\path[->] (balanced) edge[] node [] {} (balanceddots);

\path (stream)+(-1.2,-1) node[data] (fsr) {\footnotesize FSR};
\path (stream)+(0,-1) node[data] (arx) {\footnotesize ARX};
\path (stream)+(1.2,-1) node[data] (streamdots) {\footnotesize \dots};
\path[->] (stream) edge[] node [] {} (fsr);
\path[->] (stream) edge[] node [] {} (arx);
\path[->] (stream) edge[] node [] {} (streamdots);

\path (fsr)+(0,-1) node[data] (lfsr) {\footnotesize LFSR};
\path (fsr)+(1.3,-1) node[data] (nlfsr) {\footnotesize NLFSR};
\path[->] (fsr) edge[] node [] {} (lfsr);
\path[->] (fsr) edge[] node [] {} (nlfsr);

\path (classic)+(0,-1) node[op] (des) {\footnotesize DES};
\path[->] (classic) edge[] node [] {} (des);

\path (matsui)+(0,-1) node[op] (kasumi) {\footnotesize KASUMI};
\path[->] (matsui) edge[] node [] {} (kasumi);

\path (lfsr)+(0,-1) node[op] (a51) {\footnotesize A5/1};
\path[->] (lfsr) edge[] node [] {} (a51);

\path (nlfsr)+(0,-1) node[op] (keeloq) {\footnotesize KeeLoq};
\path[->] (nlfsr) edge[] node [] {} (keeloq);

\path (des)+(-1.5,0) node[] (desspec) {\footnotesize \shortstack{Rounds: 16\\Block size: 64\\Key size:56}};
\path[-] (desspec) edge[] node[] {} (des);

\path (kasumi)+(1.8,0) node[] (kasumispec) {\footnotesize \shortstack{Rounds: 8\\Block size: 64\\Key size:128}};
\path[-] (kasumispec) edge[] node[] {} (kasumi);

\path (a51)+(-1.45,0) node[] (a51spec) {\footnotesize \shortstack{Key size,\\Polynomials,\\Cycles, \dots}};
\path[-] (a51spec) edge[] node[] {} (a51);

\path (keeloq)+(1.6,0) node[] (keeloqspec) {\footnotesize \shortstack{Key size,\\Polynomials,\\Cycles, \dots}};
\path[-] (keeloqspec) edge[] node[] {} (keeloq);

\end{tikzpicture}
}
 \caption{Taxonomical tree of algorithm classes}
 \label{fig:taxonomy}
\end{figure}

Our approach is built on two fundamentals: Data Flow Graph (DFG) isomorphism and symbolic execution. As described in Section~\ref{sec:overview}, the limitations of prior work on DFG isomorphism~\cite{lestringant2015automated}
are overcome through augmentation with symbolic execution which allows us to specify structural signatures for taxonomic classes of cryptographic primitives and analyze binary code for matches. The focus of this paper is on symmetric and unkeyed primitives.

\vspace{-3mm}
\paragraph{Contribution}
Our contribution is threefold. First, our novel approach combines subgraph isomorphism with symbolic execution, solving the open problem of fragment selection and eliminating the need for heuristics and thus, overcoming the limitations of prior work which rendered it unsuited to identifying unknown ciphers. To the best of our knowledge, as discussed in Section~\ref{sec:related}, there is currently no prior work in either industry or academia that addresses the problem of identifying unknown cryptographic algorithms. Second, we propose a new domain-specific language (DSL) for defining the structural properties of cryptographic primitives, along with several examples. Finally, a free and open-source proof-of-concept (PoC) implementation, \emph{Where's Crypto?}, is made available\footnote{\url{https://github.com/wheres-crypto/wheres-crypto}} and evaluated in terms of analysis time and accuracy against relevant real-world binaries.

\section{Scope and limitations}\label{sec:scope}

\paragraph{Normalization and optimization}
A single function can be represented as many different combinations of assembly instructions depending on architecture and compiler particularities. Attempting to construct a 1--to--1 mapping between semantic equivalence classes and DFGs is beyond the scope of this work. When our normalization maps two expressions to the same DFG node, they are considered to be semantically equivalent. While the inverse is not necessarily true, our approach can operate as if this were the case since, for a compiler to take advantage of semantic equivalences, it must be consistently aware of them. Therefore, we can leverage this fact to recognize compiler-generated equivalences.

\vspace{-3mm}
\paragraph{Implicit flows}\label{sec:implicit} Data dependencies may also arise due to control-dependent assignments. For  example, given two boolean variables $a$ and $b$, statements $a \leftarrow b$ and $\texttt{if~} a \texttt{~then~} b \leftarrow \textbf{~true}; \texttt{~else~} b \leftarrow \textbf{~false}$ are semantically equivalent. In the former, $b$ directly flows to $a$, and therefore the dependency is apparent in its corresponding DFG, whereas in the latter, the dependency information is lost. Since data-dependent branches increase side-channel susceptibility, developers should refrain from using them for cryptographic primitives. Therefore, we believe it is justified to declare implicit flows out of scope. Note that implicit flows is a concept different from data-dependent branches. Support for the latter is achieved by means of symbolic execution (Section~\ref{sec:symbolic}).


\vspace{-3mm}
\paragraph{Function entry points} 
Our PoC implementation relies on IDA's recognition of function entry points as input to our algorithm. As such, inaccuracies in IDA's function recognition will reduce our coverage. However, this is not an inherent limitation of our approach but merely of the implementation.

\vspace{-3mm}
\paragraph{Code obfuscation} Since code obfuscation presents an inherent challenge to any binary-analysis approach, our approach assumes that the input it operates on is not obfuscated and delegates this de-obfuscation to a manual and/or automated pre-processing step. Automated binary deobfuscation is a well-established research field of its own which consists of a wide variety of static, dynamic, symbolic and concolic approaches \cite{yadegari2016automatic,david2017formal,salwan2018symbolic,xu2018vmhunt} drawing upon synthesis \cite{blazytko2017syntia,biondi2017effectiveness}, optimization \cite{garba2019saturn}, semantic equivalence \cite{tofighi2018dose} and machine learning \cite{tofighi2019defeating} based techniques in order to make obfuscated binaries amenable to analysis.

\vspace{-3mm}
\paragraph{Taxonomical constraints}
In our PoC evaluation and the examples of our DSL, we have limited our discussion to a subset of the taxonomy of cryptographic primitives. This is not an inherent limitation of our approach, but merely of our PoC and its evaluation. Our approach is essentially agnostic with respect to the employed taxonomy, which can be extended as users see fit, and only assumes that the algorithm the analyst is looking for is within one of its classes. Given that the vast majority of proprietary cryptography falls within a specific subset of established primitive classes~\cite{verdult2015security}, namely stream- and block ciphers and hash functions, we do not consider this a practical issue.

\vspace{-3mm}
\paragraph{False positives}
Certain primitive classes are a subset of others and some instances fit the definition of several ones. As such, their matches are prone to false positives. Examples of such are discussed in Section~\ref{sec:falsepositives}. We do not consider this a serious practical problem as our solution is intended to assist a human analyst who will be easily capable of pruning a limited number of false positives compared to the burden of unassisted analysis required by the status quo.

Furthermore, certain primitive classes are essentially underdefined. That is to say, their definition is so broad that characteristic properties are not distinctive enough for a meaningful identification. For example, the defining property of stream ciphers is two data streams being XOR-ed together. Obviously, identifying instances of XOR results in an overwhelming number of false positives. In case a signature for such a generic class is desired, an alternative approach is to craft signatures for every subclass contained within it.

\vspace{-3mm}
\paragraph{Path oracle policy}\label{sec:pathoraclelimitations} 
The path oracle policy discussed in Section~\ref{sec:pathoracle} is chosen such that the resulting graph represents $n$ iterations of an algorithm. While this typically satisfies our goals, there are a few exceptions to this rule. First, compilers sometimes ensure loop-guard evaluation during both entry and exit, resulting in a DFG representing $n+1$ iterations. Second, cryptographic primitives with a constant iteration length are beyond the control of the path oracle. Finally, loop unrolling will result in a DFG representing $kn$ iterations, where $k$ denotes the number of compiler-grouped iterations. In order to overcome this limitation, we suggest taking the possibility of iteration count deviating from $n$ into account during signature construction as described in Section~\ref{sec:signatures}, for example by defining a minimum rather than an exact match.

\section{Prior work}\label{sec:related}
Prior work by academia and industry into the identification of cryptographic algorithms in binary code can be divided into (combinations of) the following approaches:

\vspace{-3mm}
\paragraph{Dedicated functionality identification} The most naive and straight-forward approach consists of identifying dedicated cryptographic functionality in the form of OS APIs (e.g. Windows CryptoAPI/CNG)~\cite{matenaar2012cis}, library imports or dedicated instructions (e.g. AES-NI). This approach is inherently incapable of detecting unknown algorithms. 

\vspace{-3mm}
\paragraph{Data signatures} The most common approach employed in practice~\cite{auriemma2013signsrch,plohmann2012simplifire,guilfanovfindcrypt,snaker2015kanal,loki2008snd,x3chun2004crypto,paradox2009hcd,levin2013draft} consists of identifying cryptographic algorithms on the basis of constants (e.g. IVs, Nothing-Up-My-Sleeve Numbers, padding) and lookup tables (e.g. S-Boxes, P-Boxes). The approach is unsuitable for detecting unknown algorithms. Moreover, the same applies for known algorithms that do not rely on fixed data, or those that do, but, for example, use dynamically generated S-Boxes, rather than embedded ones.



\vspace{-3mm}
\paragraph{Code heuristics} Another series of approaches rely on code heuristics, which are applied either statically or dynamically, like mnemonic-constant tuples~\cite{grobert2011,lagadec2014balbuzard}, which take into account word sizes, endianness, and multiplicative and additive inverses but otherwise suffer from the same drawbacks as data signatures.

A second heuristic relies on the observation that symmetric cryptographic routines tend to consist of a high ratio of bitwise arithmetic instructions~\cite{caballero2009dispatcher,plohmann2012simplifire,lagadec2014balbuzard,grobert2011,matenaar2012cis} and attempt to classify functions based on a threshold. The drawback of this approach is that it lacks granular taxonomical identification capabilities as well as being highly prone to false positives, especially on embedded systems where heavy bitwise arithmetic is typically present as part of memory-mapped register operations required for peripheral interaction.

\vspace{-3mm}
\paragraph{Deep learning} Hill et al.~\cite{hill2017deep} propose a Dynamic Convolutional Neural Network based approach which, however, is unsuited for our purposes due to its reliance on dynamic binary instrumentation and its inherent inability to classify unknown algorithms.

\vspace{-3mm}
\paragraph{Data flow analysis}
One set of approaches to data flow analysis relies on the static relation between functions and their inputs and outputs~\cite{grobert2011,newsome2005dynamic,calvet2012aligot,matenaar2012cis}. One plausible approach is to perform taint analysis and evaluate function I/O entropy changes, which relies on emulation and as such is unsuitable as per our criteria in Section~\ref{sec:criteria}. Another approach is to compare emulated or symbolically executed function I/O to a collection of reference implementations or test vectors, which is inherently incapable of detecting unknown algorithms.

Another approach~\cite{xu2017cryptographic} utilizes dynamic instrumentation and symbolic execution to translate candidate cryptographic algorithms into boolean formulas for subsequent comparison to reference implementations using guided fuzzing. However, its reliance on dynamic instrumentation and inherent inability to recognize unknown algorithms render the approach unsuitable for our purposes.

Finally, there is the DFG isomorphism approach as proposed by~\cite{lestringant2015automated} which produces DFGs from a given binary and compares it against graphs of known cryptographic algorithms through the use of Ullmann's subgraph isomorphism algorithm\cite{ullmann1976}.
A DFG is a Directed Acyclic Graph (DAG) representing the flow of data within a sequence of arithmetic/logic operations. A vertex represents either an operation, or an input variable. The presence of an edge between vertex $v_1$ and $v_2$ means that $v_1$ (or the result of operation $v_1$) is an input to operation $v_2$.
Due to the nature of DFGs, code flow information cannot be expressed. As such, the contributions of~\cite{lestringant2015automated} are limited to linear sequences of instructions. Moreover, the authors argue that since cryptographic implementations ought to avoid data-dependent branching due to side-channel susceptibility, one can assume all cryptographic code is free from data-dependent conditional instructions. This latter generalization introduces several limitations.

First, no straightforward strategy for selecting code fragments is proposed. Performing the analysis on a per-function basis is complicated by the fact that cryptographic implementations are commonly surrounded by some basic control logic, such as checks on input parameters. As a result, analysis can neither be applied to entire functions nor across function boundaries through inlining and hence the authors propose a limited set of selection heuristics constraining the work.

Second, the approach performs well when identifying known algorithms since one can take advantage of algorithm-unique characteristics, but this does not hold when attempting to identify unknown algorithms. Furthermore, a common pattern is that the class of a cryptographic primitive often only becomes apparent once the analysis incorporates conditional instructions. We clarify this point using the following toy examples.

Suppose that we would like to identify a proprietary stream cipher $\sigma$. A typical implementation contains a key-stream generator, generating pseudo-random bytes in a loop. Inevitably, this loop contains a conditional instruction causing the program to either re-enter or exit the loop, depending on the length parameter. As there is no support for conditional instructions depending on non-constant values, DFG $G$, generated from $\sigma$ will, at most, represent a single iteration, covering a single unit of input length (bytes or otherwise). In this typical example, clearly, a stream cipher pattern will not become apparent in $G$. The example can be generalized to any pattern that becomes apparent only after several iterations, where no additional properties of the target primitive are known.

Similarly, suppose that we would like to identify a proprietary hash function $\theta$, based on a Merkle-Damg{\aa}rd construction. $\theta$ invokes compression function $F$, which processes blocks of fixed input length. The Merkle-Damg{\aa}rd construction is then used to allow variable input lengths. As such, in order to generate a DFG wherein the construction is apparent, we need it to incorporate several iterations, and perform inlining of $F$. The former is problematic (as per the stream cipher example), and so is the latter in case $F$ performs some kind of input validation, for e.g. checking for \textsc{null} pointers.

\section{Solution overview}\label{sec:overview}
Cryptographic primitives are essentially a set of arithmetic and logical operations representing an input/output relation. This structural relationship between operations and data can be expressed as a DFG. Since all particular algorithms will be structurally similar to the general primitive defining their taxonomical class, the problem of identifying an unknown algorithm assumed to belong to a well-defined taxonomical class can be formulated as a DFG subgraph isomorphism problem. However, due to slight differences in implementation and compiler peculiarities, DFG representations of semantically identical algorithms may differ and such representations require normalization before they can be subjected to isomorphism analysis.
Lestringant et al.~\cite{lestringant2015automated} demonstrated that, by repeatedly applying a set of rewrite rules to the DFG, a normalized version is obtained, wherein many of these variations are removed. Although no guarantee can be given that equivalent semantics will always map to the same DFG, the result is `good enough' to serve as a data structure for the purpose.

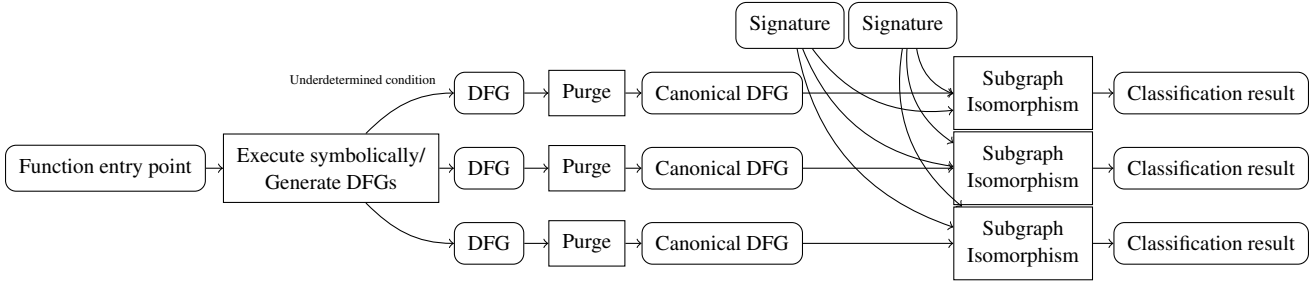
\begin{figure*}[tb]
\centering
   \begin{tikzpicture}[auto]
    \node[data] (entry) {\footnotesize Function entry point};
    \path (entry)+(3,0) node[op] (symex) {\footnotesize \shortstack{Execute symbolically/\\Generate DFGs}};
    \path (entry)+(5.1,1) node[data] (dfg0) {\footnotesize DFG};
    \path (entry)+(5.1,0) node[data] (dfg1) {\footnotesize DFG};
    \path (entry)+(5.1,-1) node[data] (dfg2) {\footnotesize DFG};
    \path (dfg0)+(1.3,0) node[op] (clean0) {\footnotesize Purge};
    \path (dfg1)+(1.3,0) node[op] (clean1) {\footnotesize Purge};
    \path (dfg2)+(1.3,0) node[op] (clean2) {\footnotesize Purge};
    \path (clean0)+(1.8,0) node[data] (cdfg0) {\footnotesize Canonical DFG};
    \path (clean1)+(1.8,0) node[data] (cdfg1) {\footnotesize Canonical DFG};
    \path (clean2)+(1.8,0) node[data] (cdfg2) {\footnotesize Canonical DFG};

    \path (cdfg0)+(0.9,0.9) node[data] (sign0) {\footnotesize \shortstack{Signature}};
    \path (cdfg0)+(2.4,0.9) node[data] (sign1) {\footnotesize \shortstack{Signature}};
    
    \path (cdfg0)+(4,0) node[op] (ism0) {\footnotesize \shortstack{Subgraph\\Isomorphism}};
    \path (cdfg1)+(4,0) node[op] (ism1) {\footnotesize \shortstack{Subgraph\\Isomorphism}};
    \path (cdfg2)+(4,0) node[op] (ism2) {\footnotesize \shortstack{Subgraph\\Isomorphism}};
    \path (ism0)+(2.5,0) node[data] (result0) {\footnotesize Classification result};
    \path (ism1)+(2.5,0) node[data] (result1) {\footnotesize Classification result};
    \path (ism2)+(2.5,0) node[data] (result2) {\footnotesize Classification result};

    \path[->] (entry) edge[] node [] {} (symex);
    \path[->] (symex) edge[out=45,in=180,pos=0.92] node [] {\tiny Underdetermined condition} (dfg0);
    \path[->] (symex) edge[] node [] {} (dfg1);
    \path[->] (symex) edge[out=-45,in=180] node [] {} (dfg2);
    \path[->] (dfg0) edge[] node [] {} (clean0);
    \path[->] (dfg1) edge[] node [] {} (clean1);
    \path[->] (dfg2) edge[] node [] {} (clean2);
    \path[->] (clean0) edge[] node [] {} (cdfg0);
    \path[->] (clean1) edge[] node [] {} (cdfg1);
    \path[->] (clean2) edge[] node [] {} (cdfg2);
    \path[->] (cdfg0) edge[] node [] {} (ism0);
    \path[->] (cdfg1) edge[] node [] {} (ism1);
    \path[->] (cdfg2) edge[] node [] {} (ism2);
    \path[->] (sign0) edge[bend right=30] node [] {} (ism0);
    \path[->] (sign0) edge[bend right=30] node [] {} (ism1);
    \path[->] (sign0) edge[bend right=30] node [] {} (ism2);
    \path[->] (sign1) edge[bend right=30] node [] {} (ism0);
    \path[->] (sign1) edge[bend right=30] node [] {} (ism1);
    \path[->] (sign1) edge[bend right=30] node [] {} (ism2);
    \path[->] (ism0) edge[] node [] {} (result0);
    \path[->] (ism1) edge[] node [] {} (result1);
    \path[->] (ism2) edge[] node [] {} (result2);
  \end{tikzpicture}
  \caption{Diagram of primitive identification process}
  \label{fig:overview}
\vspace{-0.33cm}
\end{figure*}

The identification procedure consists of three stages. A diagram of the procedure is given in Figure~\ref{fig:overview}. First, given the entry point of a function, we start executing it symbolically. A DFG is constructed during the execution, where each instruction adds a set of nodes and edges to the graph. In case a conditional instruction is encountered, the execution path belonging to the condition evaluating to \textbf{true}, \textbf{false}, or both paths are explored. In the latter case, the partially constructed DFG is duplicated and the construction continues independently for both execution paths. Hence, the final result of the DFG construction phase is, in fact, a set of DFGs describing the input/output relation corresponding to the execution path taken. Section~\ref{sec:dfgconstruct} describes the construction phase in detail.

Second, once a DFG is fully constructed, we enter the purging phase. This phase is responsible for removing nodes from the graph that represent neither an output, nor a value used in the computation of any output. As such, the graph is reduced to a form in which it only represents the input/output relation, free from operations introduced due to register spilling and other possible implementation, compiler, and architecture-specific operations that are irrelevant to the function's semantics. Section~\ref{sec:dfgpurge} describes the purging phase in detail.

Last, with the finalized DFG at our disposal, we enter the pattern-matching phase, where we search for subgraphs in the DFG that are isomorphic to the graph signature of a given cryptographic primitive. If such a subgraph is identified, we conclude that the primitive is indeed present in the instructions from which the DFG was generated. We use Ullmann's subgraph isomorphism algorithm for searching the DFG. Section~\ref{sec:dfgsearch} describes the pattern-matching phase in detail.

\section{Data Flow Graph construction}\label{sec:dfgconstruct}
The approach of constructing the DFG from assembly instructions builds upon that of~\cite{lestringant2015automated}. This section summarizes their approach, and indicates where ours departs from it.

Suppose we have a sequence of assembly instructions. We construct its corresponding DFG, $G = (V,E)$, by converting each instruction $i$ into a set of operations $O_i$, which can potentially be empty (e.g., a \texttt{NOP} or branch), or contain multiple operations (e.g., a complex instruction). We distinguish three cases based on input type, as follows:
\vspace{-3mm}
\paragraph{Immediate} We create a vertex representing a constant value in $G$. It is linked by an edge to $O_i$.
 \vspace{-3mm}
\paragraph{Register} In case an instruction takes a register as an input operand, we create an edge between the last value written to that register and $O_i$. In practice, this means we maintain an array containing, for each register, a reference to the vertex in $G$ corresponding to that value.
 \vspace{-3mm}
\paragraph{Memory} For operands that load or store from/to memory, we create \texttt{LOAD} and \texttt{STORE} operations. Both operations take a memory address vertex as input. Like any other vertex, the address can be a constant, or a more complex symbolic expression.

Ideally, we would like all code fragments within a semantic equivalence class to map to the same DFG, and have the end result represent the semantics only, free from architecture and compiler-specific traits.
The approach followed by~\cite{lestringant2015automated} is to take the generated DFG, and repeatedly apply normalization rewrite rules until a fixed-point is reached. This is where our approach deviates from theirs, as we apply normalization as well, but \emph{continuously} during graph construction. This enhances performance, which we argue below in Section~\ref{sec:dfgconstructadvantages}, and allows us to efficiently keep track of the conditions that apply during symbolic execution (Section~\ref{sec:symbolic}).

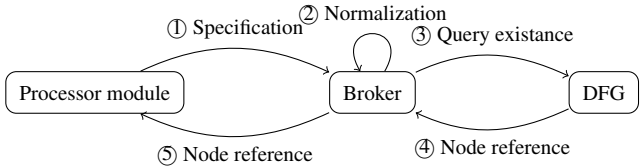
\begin{figure}[htb]
    \begin{tikzpicture}[auto]
    \node[data] (proc) {\footnotesize Processor module};
    \path (proc.east)+(2.5,0) node[data] (broker) {\footnotesize Broker};
    \path (broker.east)+(2.5,0) node[data] (dfg) {\footnotesize DFG};

    \path[->] (proc) edge[in=155,out=25,above] node [] {\footnotesize \textcircled{1} Specification} (broker);
    \path[<-] (proc) edge[in=205,out=335,below] node [] {\footnotesize \textcircled{5} Node reference} (broker);
    \path[->] (broker) edge[in=120,out=60,looseness=6,above] node [] {\footnotesize \textcircled{2} Normalization} (broker);
    \path[->] (broker) edge[in=155,out=25] node[] {\footnotesize \textcircled{3} Query existance} (dfg);
    \path[<-] (broker) edge[in=205,out=335,below] node [] {\footnotesize \textcircled{4} Node reference} (dfg);

  \end{tikzpicture}
\caption{Flow of the graph-node creation process}
  \label{fig:broker}
\end{figure}

A diagram of the graph-node creation process is given in Figure~\ref{fig:broker}. More concretely: there is a \emph{processor module}, written for a specific architecture that translates each instruction into graph nodes. The processor module cannot autonomously create new graph nodes. Instead, it must interact with the \emph{broker}. The broker is responsible for the application of normalization rewrite rules and is processor-architecture agnostic.
The processor module provides a specification of the desired node to the broker, which in turn applies normalization rewrite rules to the specification. As such, the result either matches the specification exactly, or a different one that is semantically equivalent.
After normalization, the broker queries the DFG for whether a node conforming to the normalized specification already exists. If it does, a reference to it is returned, rather than a new node being created.
Consequently, there cannot exist two distinct nodes in a graph conforming to the same specification, or equivalent under normalization. We prove this property in Lemma~\ref{lem:normalized}.

\begin{lem}
\label{lem:normalized}
Let $G = (V, E)$ be a DFG, and $\op$ denote the normalization transform, for which holds: (1) $\op(\op(x)) = \op(x)$ for all $x \in U$ (universe).
Consider arbitrary arithmetic/logical operation $\operation(v_1,v_2)$, where $v_1,v_2 \in V$.

A broker request for $\operation$ preserves the following properties:
(i) For all $v \in V$, $v = \op(v)$, i.e. all nodes in $G$ are normalized.
(ii) For all $v_1, v_2 \in V$, $\op(v_1) = \op(v_2) \implies v_1 = v_2$, i.e. all nodes in $G$ belong to a unique equivalence class under the normalization function.
\end{lem}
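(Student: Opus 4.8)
The statement is an invariant-preservation claim, so the plan is to prove it as the inductive step of an induction over the sequence of broker requests issued during graph construction. I would take the base case to be the initial graph, whose vertices are atomic input and immediate nodes that are fixed points of $\op$ by convention, so that (i) and (ii) hold trivially there; the lemma itself is then exactly the inductive step, namely that a single broker request for $\operation(v_1,v_2)$ carries (i) and (ii) from the pre-request graph $G=(V,E)$ to the post-request graph $G'=(V',E')$. Note that since $v_1,v_2\in V$, invariant~(i) already gives $v_1=\op(v_1)$ and $v_2=\op(v_2)$, so the request operates on normalized inputs.

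First I would make the broker's behavior precise following the pipeline of Figure~\ref{fig:broker}. Given the specification $\operation(v_1,v_2)$, the broker forms the normalized specification $w := \op(\operation(v_1,v_2))$ (step~\textcircled{2}) and then queries whether a node matching $w$ already exists in $V$ (step~\textcircled{3}). The key observation, which I would record up front, is that because (i) holds for $G$, every existing node $v$ satisfies $v=\op(v)$; hence ``a node conforming to the normalized specification $w$'' is the same as ``a node literally equal to $w$,'' so the query succeeds if and only if $w\in V$. This is the bridge between the informal notion of conformance under normalization and honest set membership, and it is the conceptual crux of the argument.

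Then I would split into two cases. In the reuse case ($w\in V$) the broker returns the existing node and $V'=V$, so (i) and (ii) are inherited unchanged. In the creation case ($w\notin V$) we set $V'=V\cup\{w\}$ and must re-establish both properties. For (i): the old nodes remain normalized, and the new node is normalized by idempotency, since $\op(w)=\op(\op(\operation(v_1,v_2)))=\op(\operation(v_1,v_2))=w$ by hypothesis~(1). For (ii): every node of $V'$ is now a fixed point of $\op$, so for $a,b\in V'$ the equality $\op(a)=\op(b)$ reduces to $a=\op(a)=\op(b)=b$; equivalently, the only way adding $w$ could violate (ii) would be $\op(w)=\op(v')$ for some old $v'\in V$, but then $w=\op(w)=\op(v')=v'\in V$, contradicting $w\notin V$. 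Hence $w$ opens a fresh equivalence class and uniqueness is preserved.

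I expect the main obstacle to be conceptual rather than computational: pinning down exactly what the query step compares, and justifying that under invariant~(i) ``matches the normalized specification'' collapses to literal equality of normalized nodes, which is what turns step~\textcircled{3} into an exact duplicate-elimination test. A secondary point worth flagging is that a single broker request should be read as producing one normalized output node on already-normalized inputs; compound normalization rewrites that introduce auxiliary subexpressions are realized as a sequence of such requests, each covered by the same argument, so the single-step lemma suffices.
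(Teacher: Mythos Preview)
Your proof is correct and follows essentially the same route as the paper: define $w=\op(\operation(v_1,v_2))$, split on whether $w\in V$, use idempotency of $\op$ for (i), and derive a contradiction from a hypothetical collision for (ii). The only cosmetic difference is that the paper takes the empty graph as the base case (all nodes, including atoms, are created via broker requests), whereas you posit a pre-populated initial graph; either choice works and the inductive step---which is the content of the lemma---is identical.
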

\begin{proof}
Assume (i) and (ii) hold for $V$. We define $q = \op(\operation(v_1,v_2))$ and distinguish two cases.

If $q \in V$, then $G$ is not modified and (i) and (ii) are trivially preserved.
If $q \not\in V$, then $V' = V \cup \{q\}$.
By applying (1), we get $\op(q) = q$, and thus (i) holds for $\{q\}$. Since (i) already holds for $V$, (i) also holds for $V'$.
Furthermore, suppose that there exists $p \in V$, for which $\op(p) = \op(q)$. By (i), we get $\op(p) = p$, and hence $p = \op(q)$. By definition, $q = \op(\operation(v_1,v_2))$ and hence $p = \op(\op(\operation(v_1,v_2)))$. By (1), we get $p = \op(\operation(v_1,v_2))$ and thus $p = q$. This contradicts $q \not\in V$, and hence no $p \in V$ exists such that $\op(p) = \op(q)$. Therefore, (ii) holds for $V'$.

Since (i) and (ii) trivially hold for the base case, i.e., an empty graph $G$, where $V = \varnothing$, and the above shows preservation during the step case, the properties hold for any $G$. 
\end{proof}




At this point, we are ready to describe the normalization rewrite rules; they include operation simplification, common-subexpression elimination, and subsequent memory access.

\vspace*{-0.35cm}\paragraph{Operation simplification} Suppose that we encounter an arithmetic/logic operation for which all input parameters are constants. Then, the operation can be replaced by its result.

\vspace*{-0.25cm}\begin{center}
\scalebox{0.8}{
\begin{tikzpicture}[auto]
\node[data] (4) {\footnotesize \texttt{4}};
\path (4)+(0.75,0) node[data] (12) {\footnotesize \texttt{12}};
\path (4)+(0.375, -0.8) node[op,circle,inner sep=0.1cm] (add) {+};

\path[->] (4) edge[] node[] {} (add);
\path[->] (12) edge[] node[] {} (add);

\path (add)+(1.5,0.4) node [] (arrow1) {};
\path (arrow1.center)+(1,0) node[] (arrow2) {};
\path[->,line width=3pt] (arrow1.center) edge [] node[] {} (arrow2.center);

\path (arrow2)+(1,0) node [data] (16) {\footnotesize \texttt{16}};
\end{tikzpicture}
}
\end{center}

\vspace*{-0.35cm}\noindent Likewise, in case an element is the identity element for the operation it serves as an input to, the operation has no effect and can be removed. In case an element is the zero element, the operation can be replaced by zero.

%
%
%
%
%
%
%
%

\vspace*{-0.35cm}\paragraph{Common subexpression elimination} Often within a code fragment, the same value is re-computed several times. This is especially true when the instruction set allows for expressing complex operands, for e.g. supporting offsets and shifts. 
Lemma~\ref{lem:normalized} states that broker requests for nodes belonging to a certain equivalence class all result in references to the same graph node. 
Hence, common-subexpression elimination is already achieved by the design of the node-creation process.

\vspace*{-0.25cm}\begin{center}
\scalebox{0.8}{
\begin{tikzpicture}[auto]
\node[data] (sp1) {\footnotesize SP};
\path (sp1)+(0.375,-0.8) node[op,circle,inner sep=0.1cm] (add1) {+};
\path (sp1)+(0.75,0) node[op,circle,inner sep=0.1cm] (lsh1) {\footnotesize \textsc{\textless{}\textless{}}};
\path (lsh1)+(-0.375,0.8) node[data] (a1) {\footnotesize R2};
\path (a1)+(0.75,0) node[data] (b1) {\footnotesize \texttt{2}};

\path (lsh1)+(1.5,0) node[data] (sp2) {\footnotesize R0};
\path (sp2)+(0.375,-0.8) node[op,circle,inner sep=0.1cm] (add2) {+};
\path (sp2)+(0.75,0) node[op,circle,inner sep=0.1cm] (lsh2) {\footnotesize \textsc{\textless{}\textless{}}};
\path (lsh2)+(-0.375,0.8) node[data] (a2) {\footnotesize R2};
\path (a2)+(0.75,0) node[data] (b2) {\footnotesize \texttt{2}};

\path[->] (sp1) edge[] node[] {} (add1);
\path[->] (lsh1) edge[] node[] {} (add1);
\path[->] (a1) edge[] node[] {} (lsh1);
\path[->] (b1) edge[] node[] {} (lsh1);

\path[->] (sp2) edge[] node[] {} (add2);
\path[->] (lsh2) edge[] node[] {} (add2);
\path[->] (a2) edge[] node[] {} (lsh2);
\path[->] (b2) edge[] node[] {} (lsh2);

\path (lsh2)+(1,0) node [] (arrow1) {};
\path (arrow1.center)+(1,0) node[] (arrow2) {};
\path[->,line width=3pt] (arrow1.center) edge [] node[] {} (arrow2.center);

\path (arrow2)+(0.75,0.2) node[data] (sp3) {\footnotesize SP};
\path (sp3)+(0.375,-0.8) node[op,circle,inner sep=0.1cm] (add3) {+};
\path (sp3)+(0.9,-1.2) node[op,circle,inner sep=0.1cm] (add4) {+};
\path (sp3)+(0.75,0) node[op,circle,inner sep=0.1cm] (lsh3) {\footnotesize \textsc{\textless{}\textless{}}};
\path (lsh3)+(0.65,-0.4) node[data] (sp4) {\footnotesize R0};
\path (lsh3)+(-0.2,0.8) node[data] (a3) {\footnotesize R2};
\path (a3)+(0.75,0) node[data] (b3) {\footnotesize \texttt{2}};

\path[->] (sp3) edge[] node[] {} (add3);
\path[->] (lsh3) edge[] node[] {} (add3);
\path[->] (a3) edge[] node[] {} (lsh3);
\path[->] (b3) edge[] node[] {} (lsh3);
\path[->] (lsh3) edge[] node[] {} (add4);
\path[->] (sp4) edge[] node[] {} (add4);
\end{tikzpicture}
}
\end{center}

\vspace*{-0.35cm}\paragraph{Memory access} Loading and storing of data from/to main memory is a common operation. However, this need not have a relation with semantics, but may be due to register filling and spilling. We attempt to correct for this by substituting each \texttt{LOAD} operation by its result, which is known in case a preceding \texttt{STORE} operation to the same memory address node exists.
It is important to be able to identify the potential equivalence of memory address nodes passed to the \texttt{STORE} and \texttt{LOAD} operation. Like any other expression, memory addresses are represented by graph nodes. Given Lemma~\ref{lem:normalized}, all equivalent address nodes are mapped to a single graph node.
By maintaining a lookup table during graph construction, for e.g., a hash table mapping address nodes to their corresponding stored value, the substitution can be performed in constant time.

\vspace*{-0.25cm}\begin{center}
\scalebox{0.8}{
\begin{tikzpicture}[auto]
\node[data] (r3) {\footnotesize R3};
\path (r3)+(0.75,0) node[op,circle,inner sep=0.1cm] (add1) {+};
\path (add1)+(-0.375,0.8) node[data] (sp1) {\footnotesize SP};
\path (sp1)+(0.75,0) node[data] (8) {\footnotesize \texttt{8}};
\path (r3)+(0.375,-0.8) node[op] (store) {\footnotesize \texttt{STORE}};

\path (add1)+(2,0) node[op,circle,inner sep=0.1cm] (add2) {+};
\path (add2)+(-0.375,0.8) node[data] (sp2) {\footnotesize SP};
\path (sp2)+(0.75,0) node[data] (8p) {\footnotesize 8};
\path (add2)+(0.2,-0.8) node[op] (load) {\footnotesize \texttt{LOAD}};
\path (load)+(-0.5,-0.8) node[op,circle,inner sep=0.05cm] (and) {\footnotesize \textsc{and}};
\path (load)+(-1,0) node[data] (const) {\footnotesize \texttt{0xff}};

\path[->] (r3) edge[] node[] {} (store);
\path[->] (add1) edge[] node[] {} (store);
\path[->] (sp1) edge[] node[] {} (add1);
\path[->] (8) edge[] node[] {} (add1);

\path[->] (add2) edge[] node[] {} (load);
\path[->] (sp2) edge[] node[] {} (add2);
\path[->] (8p) edge[] node[] {} (add2);
\path[->] (load) edge[] node[] {} (and);
\path[->] (const) edge[] node[] {} (and);

\path (add2)+(1,-0.4) node [] (arrow1) {};
\path (arrow1.center)+(1,0) node[] (arrow2) {};
\path[->,line width=3pt] (arrow1.center) edge [] node[] {} (arrow2.center);

\path (arrow2)+(2.05,0.4) node[op,circle,inner sep=0.1cm] (add3) {+};
\path (add3)+(-0.75,0) node[data] (r3p) {\footnotesize R3};
\path (add3)+(-0.6,0.8) node[data] (sp3) {\footnotesize SP};
\path (sp3)+(0.75,0) node[data] (8pp) {\footnotesize \texttt{8}};
\path (r3p)+(0.65,-0.8) node[op] (storep) {\footnotesize \texttt{STORE}};
\path (r3p)+(0.1,-1.6) node[op,circle,inner sep=0.05cm] (andp) {\footnotesize \textsc{and}};
\path (r3p)+(-0.5,-0.8) node[data] (constp) {\footnotesize \texttt{0xff}};

\path[->] (r3p) edge[] node[] {} (storep);
\path[->] (add3) edge[] node[] {} (storep);
\path[->] (sp3) edge[] node[] {} (add3);
\path[->] (8pp) edge[] node[] {} (add3);
\path[->] (r3p) edge[] node[] {} (andp);
\path[->] (constp) edge[] node[] {} (andp);
\end{tikzpicture}
}
\end{center}

\vspace*{-0.35cm}For associative operations, the result does not depend on the order in which they are executed. Therefore we translate nested associative operations into a single operation taking all inputs.

\vspace*{-0.25cm}\begin{center}
\scalebox{0.8}{
\begin{tikzpicture}[auto]
\node[data] (x) {\footnotesize SP};
\path (x)+(0.75,0) node[data] (y) {\footnotesize R0};
\path (x)+(0.375, -0.8) node[op,circle,inner sep=0.1cm] (add) {+};
\path (add)+(0.75,0) node[data] (z) {\footnotesize \texttt{4}};
\path (add)+(0.375, -0.8) node[op,circle,inner sep=0.1cm] (add1) {+};

\path[->] (x) edge[] node[] {} (add);
\path[->] (y) edge[] node[] {} (add);
\path[->] (add) edge[] node[] {} (add1);
\path[->] (z) edge[] node[] {} (add1);

\path (z)+(1,0) node [] (arrow1) {};
\path (arrow1.center)+(1,0) node[] (arrow2) {};
\path[->,line width=3pt] (arrow1.center) edge [] node[] {} (arrow2.center);

\path (arrow2)+(1,0.4) node [data] (x1) {\footnotesize SP};
\path (x1)+(0.75,0.2) node[data] (y1) {\footnotesize R0};
\path (y1)+(0.7,-0.3) node[data] (z1) {\footnotesize \texttt{4}};
\path (y1)+(-0.2,-1) node[op,circle,inner sep=0.1cm] (add2) {+};

\path[->] (x1) edge[] node[] {} (add2);
\path[->] (y1) edge[] node[] {} (add2);
\path[->] (z1) edge[] node[] {} (add2);
\end{tikzpicture}
}
\end{center}

\vspace*{-0.35cm}\paragraph{Miscellaneous translations}
Besides the rewrite rules described above, we apply additional miscellaneous rules that do not fit any of the aforementioned categories. They are listed in Appendix~\ref{appendix:misctrans}.





\subsection{Advantages}
\label{sec:dfgconstructadvantages}
Applying the normalization rewrite rules during construction of the graph has several advantages over doing so once the graph is fully generated. First, in case normalization function $\op$ has constant running time complexity, then the running time complexity of the construction phase, including normalization, grows linearly with the number of assembly instructions, whereas repeated application on a wholly generated DFG has quadratic complexity.

Second, by Lemma~\ref{lem:normalized}, 
equivalence of any pair of node references can be evaluated in constant time, simply by checking whether $v_1 = v_2$. As such, substitution of \texttt{LOAD} operations by their result can be achieved in constant time. The property is also utilized extensively during symbolic execution (Section~\ref{sec:symbolic}).
%
%
Suppose some predicate $P$ involves node $v_1 \in V$. Then, a condition involving $v_2 \in V$, can be evaluated immediately under $P$ without the need for proving equivalence of $v_1$ and $v_2$ first.

\section{Symbolic execution}\label{sec:symbolic}
During the analysis of a function, we may encounter conditional instructions. By definition, a conditional instruction carries a condition. We define the terms \emph{determined} and \emph{underdetermined} conditions. These terms relate to the terminology used in the classification of systems of linear equations. For \emph{determined} conditions, the input variables are restricted to a domain such that there is only a single possible evaluation result. For example, a conditional jump instruction at the end of a loop consisting of a fixed number of iterations. Conversely, for \emph{underdetermined} conditions, the input variables are not restricted enough to determine a fixed outcome. Below we describe how we approach this class of conditions.

During the DFG construction of any function \fu, we keep a state $\St = (G, P, B)$, where $G = (V,E)$ is the partially constructed DFG. $P$ is the path condition, which is constructed during symbolic execution; a predicate restricting unknown variables to a certain domain so that, if satisfied, the execution path follows the same path taken during the DFG construction. Phrased differently: satisfaction of $P$ warrants that $G$ represents the input/output relation of $\fu$. The inverse of this statement need not be true. 
Finally, backlog $B$ is a mapping between an execution address and a list of booleans. For all underdetermined conditional instructions encountered during the construction of $G$, $B$ keeps a record of which evaluation result was chosen (i.e., true/false). Since the analysis may encounter the same conditional instruction several times, a list is kept. We define $B_e[i] \in \mathbb{B}$, as the evaluation result chosen during the $i^{th}$ occurrence of the underdetermined conditional instruction located at execution address $e$.

The graph construction begins by initializing $\St = (G, P, B)$ to the empty state, i.e. $G$ is an empty graph, $P = \textbf{true}$, and $B$ has no record of any evaluation result. Then, we begin the construction by processing the instruction located at the entry point of function \fu. Some instructions may manipulate the execution flow, for e.g., a branch instruction, in which case, we continue at its target address. The construction is complete when we encounter an instruction causing the execution flow to return to \fu's calling function. For example, in ARM assembly, this is achieved by writing the initial value of register \texttt{LR}, as set by the caller of \fu, to the program counter register \texttt{PC}.


We represent a condition $c$ in the form of a tuple $(v_1, o, v_2)$, where $v_1, v_2 \in V$, and $o \in \{<, \leq, =, \geq, >\}$ is the operator.
In case either $v_1$ or $v_2$ is non-constant, $c$ need not be underdetermined, as predicate $P$ may sufficiently restrict $v_0$ or $v_1$ so that $c$ is determined.
In case $c$ is underdetermined, both execution paths are possible, and we are forced to choose which one to follow. Alternatively, we may follow both paths, by duplicating state \St, and subsequently assigning each execution path to one of the instances. This way, the resulting final graph construction consists of several DFGs; each one representing a different execution path. We refer to this practice as \emph{forking} state \St.
Forking at the occurrence of every underdetermined condition maximizes code coverage. However, it is infeasible due to the state explosion problem. Therefore, we should devise a balanced strategy for when to apply it -- as elaborated below.

\subsection{Path Oracle}\label{sec:pathoracle}
The strategy of when to apply forking only loosely relates to the symbolic execution itself. Therefore, we introduce the \emph{Path Oracle}, a separate entity that is queried during the graph construction phase, for every occurrence of an underdetermined condition $c$. It decides whether $c$ should evaluate to \textbf{true} or \textbf{false}, or that the construction should fork and follow both execution paths. 

\alglanguage{pseudocode}
\begin{algorithm}[h!tb]
\footnotesize
 \caption{Conditional Instruction
  \label{alg:conditionalinstruction}}
  \begin{algorithmic}[0]
   \Require{$\St = (G, P, B)$, ExecutionAddress e, Condition c, PathOracle po}
    \If{$P \land c = \textbf{true}$}
     \State Evaluate instruction at $e$
    \ElsIf{$P \land c = \textbf{false}$}
     \State Skip over instruction at $e$
    \Else
     \Let{d}{po.query(e, B)}
     \If{d = \texttt{TAKE\_TRUE}}
      \Let{$P$}{$P \land c$} \Comment{expand $P$ with $c$}
      \Let{$B_e$}{$B_e \cup \{\textbf{true}\}$} \Comment{append decision to backlog}
      \State Evaluate instruction at $e$
     \ElsIf{d = \texttt{TAKE\_FALSE}}
      \Let{$P$}{$P \land \neg{c}$}
      \Let{$B_e$}{$B_e \cup \{\textbf{false}\}$}
      \State Skip over instruction at $e$
     \ElsIf{d = \texttt{TAKE\_BOTH}}
      \Let{$\St'$}{\St.fork()} \Comment{$\St' = (G', P', B')$}
      \Let{$P$}{$P \land c$}
      \Let{$B_e$}{$B_e \cup \{\textbf{true}\}$}
      \Let{$P'$}{$P' \land \neg{c}$}
      \Let{$B_e'$}{$B_e' \cup \{\textbf{false}\}$}
      \State $e$ is evaluated for $\St$, skipped for $\St'$
     \EndIf
    \EndIf
  \end{algorithmic}
\end{algorithm}

For every decision made by the path oracle, $P$ and $B$ in $\St$ are updated accordingly. The pseudocode given in Algorithm~\ref{alg:conditionalinstruction} depicts how this is done. In short, predicate $P$ is updated to include condition $c$ (or the negation thereof), thereby maintaining satisfaction of its defining property, i.e. satisfaction of $P$ guarantees $G$ represents the input/output relation of $\fu$. An entry is added to backlog $B$, reflecting the decision made by the path oracle. $B$ has no purpose beyond weighing into the decisions made by the path oracle.


\subsubsection{Path Oracle Policy}\label{sec:pathoraclepolicy}
The goal of the policy described below is, for some number $n$, to obtain a DFG consisting of exactly $n$ iterations of a primitive with variable input length.
The target primitive can subsequently be identified by searching for exactly $n$ iterations in the resulting DFG.

We define $d_{e,i} \in \{\texttt{TAKE\_TRUE}, \texttt{TAKE\_FALSE}, \texttt{TAKE\_BOTH}\}$ as the path oracle's decision for the $i^{th}$ query for the conditional instruction found at execution address $e$. The policy for the path oracle is defined as follows:
\[
\arraycolsep=1.4pt
\begin{array}{rllll}
 d_{e,0} := & \multicolumn{2}{l}{$\texttt{TAKE\_BOTH}$} \\
 \multirow{2}{*}{$d_{e,i} := $} & \multirow{2}{*}{\scalebox{0.8}{$\biggl\{$}} & \texttt{TAKE\_TRUE}  & \text{ iff } B_e[0] = \textbf{true}, & \multirow{2}{*}{\scalebox{0.8}{$\biggr\}$}$~~\forall_i \in [1, n-1]$}\\
 & &  \texttt{TAKE\_FALSE} & \text{ iff } {B_e[0]} = \textbf{false} \\

 \multirow{2}{*}{$d_{e,i} := $} & \multirow{2}{*}{\scalebox{0.8}{$\biggl\{$}} & \texttt{TAKE\_FALSE} & \text{ iff } B_e[0] = \textbf{true}, & \multirow{2}{*}{\scalebox{0.8}{$\biggr\}$}$~~\forall_i \in [n, \infty]$}\\
 & & \texttt{TAKE\_TRUE} & \text{ iff } {B_e[0]} = \textbf{false}
\end{array}
\]

We justify the choice of policy by means of an example. Suppose that we encounter an underdetermined condition $c$ at address $e$. We do not know which of the two possible execution paths leads to a cryptographic primitive (if any). Hence, for $i=0$, i.e., the first occurrence, we fork the state and explore both. Suppose that, at a later point during the graph construction, one instance visits address $e$ again, hence $i=1$, and finds itself with another underdetermined condition $c'$. Since, at this point, $P$ incorporates $c$ (or $\neg c$), the outcome of $c$ can be evaluated. As $c'$ is underdetermined, $c \neq c'$ is guaranteed.

Such behavior is typical for a loop-guard statement. If this is indeed the case, the execution path taken at $i=0$ made us revisit $e$. In light of our goal of constructing a DFG comprising of $n$ iterations of a primitive, we replicate this path choice $n-1$ times, and subsequently take the opposite path, causing the execution flow to exit the loop.
Finally, the construction phase yields two DFGs: one representing $0$ iterations, and another representing $n$ iterations.
A description of the strategy being applied to a concrete example is given in Appendix~\ref{appendix:oraclePolicy}. The strategy does not produce exactly $n$ iterations in every situation. Section~\ref{sec:pathoraclelimitations} highlights typical exceptions.

\section{Purging process}\label{sec:dfgpurge}
Once the construction is complete, graph $G$ represents the input/output relation of $\fu$, under predicate $P$. 
However, it contains other information as well, such as nodes created from temporary loads/stores to the stack, and expressions rewritten by the broker, leaving the source nodes unused.
For e.g., suppose that $v$ represents $\texttt{ADD}(x,y)$. Then, a request to the broker for $\texttt{ADD}(v,z)$ yields node $w$, representing $\texttt{ADD}(x,y,z)$. $w$ does not depend on $v$ and, unless $v$ is referenced independently elsewhere, $v$ is not part of $\fu$'s input/output relation.

Leaf nodes are, by definition, graph nodes that are not used as an input to any arithmetic/logical operation.
Our approach becomes the following: for each leaf node $v$, we check whether it is part of $\fu$'s semantics. We consider leaf node $v$ to be part of $\fu$'s semantics, if $v$ is either:
\begin{enumerate}[(i)]
\vspace{-1mm}
 \item the return value of $\fu$, 
 \vspace{-2mm}
 \item a \texttt{STORE} operation, and the target address is not relative to the \texttt{SP} register. Thus, information is stored outside of the stack, or
  \vspace{-2mm}
 \item a \texttt{CALL} operation, i.e. a function call not subject to inlining.
 \vspace{-1mm}
\end{enumerate}
In case none of the above applies, $v$ and its incoming edges can be removed from $G$, without affecting its semantics. The removal of leaf nodes continues repeatedly until no more nodes can be removed. Finally, by construction, all nodes in $G$ are either leaf nodes that are part of $\fu$'s semantics, or intermediate results contributing to some leaf.

\section{Signature Expression}\label{sec:dfgsearch}


\begin{figure}[htb]
 \vspace*{-1.5cm}
 \begin{subfigure}{0.550\columnwidth}
  \centering
  \vspace{2.25cm}
  \scalebox{0.6}{\begin{tikzpicture}[auto]
\node[parseinit] (start) {};
\path (start)+(0.25,0) node[] (junc0) {};
\path (junc0.center)+(1.2,-0.5) node[parse] (id) {\small \textsc{identifier}};
\path[->] (start) edge[] node[] {} (junc0.center);
\path (id)+(1.5,0) node [parseop] (sigid) {\scriptsize string};
\path[->] (junc0.center) edge[in=180,out=0,looseness=1.0] node[] {} (id);
\path[->] (id) edge[] node[] {} (sigid);
\path (sigid.east)+(0.3,0.5) node[] (junc1) {};
\path[->] (junc0.center) edge[] node[] {} (junc1.center);
\path[->] (sigid) edge[in=180,out=0,looseness=1.0] node[] {} (junc1.center);

\path (junc1.center)+(1.05,-0.5) node[parse] (var) {\small \textsc{variant}};
\path (var)+(1.35,0) node [parseop] (sigvar) {\scriptsize string};
\path[->] (junc1.center) edge[in=180,out=0,looseness=1.0] node[] {} (var);
\path[->] (var) edge[] node[] {} (sigvar);
\path (sigvar.east)+(0.3,0.5) node[] (junc2) {};
\path[->] (junc1.center) edge[] node[] {} (junc2.center);
\path[->] (sigvar) edge[in=180,out=0,looseness=1.0] node[] {} (junc2.center);

\path (junc2)+(0.5,0) node[] (junc26) {};
\path[-] (junc2.center) edge[] node[] {} (junc26.center);
\path (junc26.center)+(0.15, -0.15) node[] (junc3) {};
\path[-] (junc26.center) edge[in=90,out=0,looseness=1.0] node[] {} (junc3.center);
\path (junc3.center)+(0,-0.75) node[] (junc4) {};
\path[-] (junc3.center) edge[] node[] {} (junc4.center);
\path (junc4.center)+(-0.15,-0.15) node[] (junc5) {};
\path[-] (junc4.center) edge[in=0,out=270,looseness=1.0] node[] {} (junc5.center);
\path (junc0.center)+(0.3,-1.05) node[] (junc6) {};
\path (junc6)+(3.5,0) node[] (junc28) {};
\path[->] (junc5.center) edge[] node[] {} (junc28.center);
\path[-] (junc28.center) edge[] node[] {} (junc6.center);
\path (junc6.center)+(-0.15,-0.15) node[] (junc7) {};
\path[-] (junc6.center) edge[in=90,out=180,looseness=1.0] node[] {} (junc7.center);
\path (junc7.center)+(0,-0.30) node[] (junc8) {};
\path[-] (junc7.center) edge[] node[] {} (junc8.center);
\path (junc8.center)+(0.15,-0.15) node[] (junc9) {};
\path[->] (junc8.center) edge[in=180,out=270,looseness=1.0] node[] {} (junc9.center);

\path (junc9.center)+(1.2,-0.5) node[parse] (transient) {\small \textsc{transient}};
\path (transient.east)+(0.3,0.5) node[] (junc10) {};
\path[->] (junc9.center) edge[in=180,out=0,looseness=1.0] node[] {} (transient);
\path[->] (junc9.center) edge[] node[] {} (junc10.center);
\path[->] (transient) edge[in=180,out=0,looseness=1.0] node[] {} (junc10.center);
\path (junc10.center)+(0.7,-0.5) node[parseop] (label) {\scriptsize label};
\path[->] (junc10.center) edge[in=180,out=0,looseness=1.0] node[] {} (label);
\path (label)+(0.85,0) node [parse] (colon) {\small \textsc{:}};
\path[->] (label) edge[] node[] {} (colon);
\path (colon.east)+(0.25,0.5) node[] (junc11) {};
\path[->] (junc10.center) edge[] node[] {} (junc11.center);
\path[->] (colon) edge[in=180,out=0,looseness=1.0] node[] {} (junc11.center);

\path (junc11)+(0.85,0) node [parseop] (expr) {\scriptsize expression};
\path[-] (junc11.center) edge[] node[] {} (expr);
\path (expr)+(1.15,0) node [parse] (semicolon) {\small \textsc{;}};
\path[->] (expr) edge[] node[] {} (semicolon);

\path (semicolon.east)+(0.15, -0.15) node[] (junc12) {};
\path[-] (semicolon) edge[in=90,out=0,looseness=1.0] node[] {} (junc12.center);
\path (junc12.center)+(0,-0.75) node[] (junc13) {};
\path[-] (junc12.center) edge[] node[] {} (junc13.center);
\path (junc13.center)+(-0.15,-0.15) node[] (junc14) {};
\path[-] (junc13.center) edge[in=0,out=270,looseness=1.0] node[] {} (junc14.center);
\path (junc9.center)+(0,-1.05) node[] (junc15) {};
\path (junc15)+(3.5,0) node[] (junc27) {};
\path[->] (junc14.center) edge[] node[] {} (junc27.center);
\path[-] (junc27.center) edge[] node[] {} (junc15.center);

\path (junc15.center)+(-0.15,-0.15) node[] (junc16) {};
\path[-] (junc15.center) edge[in=90,out=180,looseness=1.0] node[] {} (junc16.center);
\path (junc16.center)+(0,-0.30) node[] (junc17) {};
\path[-] (junc16.center) edge[] node[] {} (junc17.center);
\path (junc17.center)+(0.15,-0.15) node[] (junc18) {};
\path[-] (junc17.center) edge[in=180,out=270,looseness=1.0] node[] {} (junc18.center);

\path (junc15.center)+(-0.15,0.15) node[] (junc24) {};
\path[-] (junc15.center) edge[in=270,out=180,looseness=1.0] node[] {} (junc24.center);
\path (junc9.center)+(-0.15,-0.15) node[] (junc25) {};
\path[-] (junc25.center) edge[in=180,out=90,looseness=1.0] node[] {} (junc9.center);
\path[-] (junc24.center) edge[] node[] {} (junc25.center);

\path (junc18.center)+(0.875,0) node[parse] (var2) {\small \textsc{variant}};
\path (var2)+(1.35,0) node [parseop] (sigvar2) {\scriptsize string};
\path[->] (junc18.center) edge[] node[] {} (var2);
\path[->] (var2) edge[] node[] {} (sigvar2);

\path (junc5.center |- sigvar2) node [] (junc19) {};
\path[-] (sigvar2) edge[] node[] {} (junc19.center);
\path (junc19.center)+(0.15,0.15) node[] (junc20) {};
\path[-] (junc19.center) edge[in=270,out=0,looseness=1.0] node[] {} (junc20.center);
\path (junc5)+(0.15,-0.15) node[] (junc21) {};
\path (junc21)+(-0.15,0.15) node[] (junc26) {};
\path[-] (junc20.center) edge[] node[] {} (junc21.center);
\path[->] (junc21.center) edge[in=0,out=90,looseness=1.0] node[] {} (junc5.center);

\path (junc17.center)+(0,-0.60) node[] (junc22) {};
\path[-] (junc17.center) edge[] node[] {} (junc22.center);
\path (junc22.center)+(0.15,-0.15) node[] (junc23) {};
\path[-] (junc22.center) edge[in=180,out=270,looseness=1.0] node[] {} (junc23.center);

\path (junc23.center)+(0.25,0) node[parseinit] (end) {};
\path[->] (junc23.center) edge[] node[] {} (end);
\end{tikzpicture}}
  \caption{High-level state machine}
  \label{fig:parser}
 \end{subfigure}
 \begin{subfigure}{0.425\columnwidth}
  \scalebox{0.6}{\input{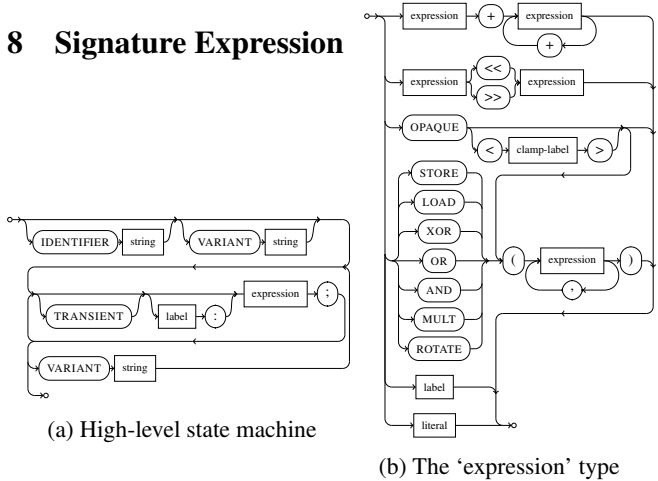}}
  \caption{The `expression' type}
  \label{fig:expression}
 \end{subfigure}
 \caption{Diagram representation of the DSL parser}
\end{figure}

\noindent In order to detect subgraph isomorphism, we need a means of expressing the signature graph. Figure~\ref{fig:parser} depicts a diagram of the signature domain-specific language (DSL). 
Appendix~\ref{appendix:signature} provides a concrete example.
The round boxes denote a keyword, whereas the square boxes denote a data type. New graph nodes are generated through the \emph{expression} data type (Figure~\ref{fig:expression}).
The \textsc{identifier} keyword allows one to specify a friendly name for the signature.
The \textsc{variant} keyword enforces the creation of a new empty DFG. Subsequent expressions are added to this graph, thus, allowing one to specify multiple variants of a signature. Subgraph isomorphism detection is ultimately performed with all variants. 
The \emph{label} data type is an optional field. It allows the node to be referenced by another expression, enabling node sharing between expressions.
Analogous to DFGs generated from assembly instructions, a DFG declared in the DSL is also subject to normalization by the broker (Section~\ref{sec:dfgconstruct}), and purging (Section~\ref{sec:dfgpurge}). In case the \textsc{transient} keyword is specified, the node generated from the expression is considered to be non-essential, and may be removed during the purging process (i.e. in case it was translated by the broker).

Figure~\ref{fig:expression} depicts the \emph{expression} data type. It is recursively defined, and hence allows for nested subexpressions.
The `\textsc{+}' keyword denotes the addition of two or more subexpressions. `\textsc{\textless{}\textless{}}'~/~`\textsc{\textgreater{}\textgreater{}}' denote a left and right shift, respectively. The \emph{label} data type is a reference to a previously defined graph node. The \emph{literal} data type denotes a constant value. The \textsc{store}, \textsc{load}, \textsc{xor}, \textsc{or}, \textsc{and}, \textsc{mult} and \textsc{rotate} keywords followed by subexpressions contained in parentheses provoke creation of a new graph node. The subexpressions serve as input nodes.
Finally, the \textsc{opaque} keyword signifies a special wildcard node. A comparison with a node of any other type by the subgraph-isomorphism algorithm always yields true. The opaque node type can have any number of input nodes, including zero. 
The optional \emph{clamp-label} data type allows one to assign a name to the node type. Consequently, a comparison with a node of any other type yields true, with the added restriction that all opaque nodes carrying the same type label must map to nodes of the same type. We refer to this practice as \emph{type clamping}.

Within the realm of identifying unknown primitives, 
a special wildcard applicable to a \emph{group} of nodes would be useful. However, to our knowledge, the nature of subgraph-isomorphism does not allow for the augmentation of any such algorithm to support one-to-many mappings. 
Alternatively, one may declare several variants of a signature, where for each variant, the wildcard group is denoted by a different number of nested opaque operations, i.e. \texttt{OPAQUE}, \texttt{OPAQUE(OPAQUE)},
etc. This way, any group consisting of a finite number of operations can be expressed.
Introducing a notation triggering the translation to multiple variants automatically has been considered. However, as the number of signature variants grows exponentially in the usage count of this hypothetical notation, we prefer to discourage its use. Hence, we omit the notation altogether, enforcing explicit declaration of multiple variants.


\section{Subgraph isomorphism}

Subgraph isomorphism is a well-documented problem, and is known to be NP complete. The solution proposed by Ullmann~\cite{ullmann1976} is a recursive backtracking algorithm with pruning. Our framework implements this algorithm, with added support for type clamping (see Section~\ref{sec:dfgsearch}). 
For further details about Ullmann's algorithm and the optimizations we applied to it, we refer the reader to the documentation included with our framework's source code.


\section{Signatures}\label{sec:signatures}
Before diving into the practical performance evaluation, we highlight the signatures used throughout the analysis, along with relevant details and a motivation as to why they are included. All signature definition files are included in our implementation of the framework. The list given below should not be interpreted as an attempt to cover the entirety of cryptographic primitives in existence. Rather, they showcase the applicability of our framework. The selection of signatures was made with a strong focus on proprietary algorithms in embedded environments. As such, they consist of symmetric and unkeyed primitives only, although there is no fundamental incompatibility with asymmetric primitives.
To our knowledge, no proprietary primitive exists to date that is studied in the scientific literature and does not fall within any of the classes covered in this section.

However, should an additional signature be desired, then it can be crafted. In broad terms, the approach is to formulate the primitive's defining properties, translate those to an abstract DFG, and finally into a signature definition expressed in the DSL. The process is somewhat ad-hoc in nature. However, the examples presented this section should provide sufficient guidance.

\subsection{AES, MD5, XTEA, SHA1} 

Despite this paper's strong focus on unknown primitives, and hence generic signatures, algorithm-specific signatures, such as AES, MD5, XTEA and SHA1, can be defined and used. Doing so allows us to directly compare results with~\cite{lestringant2015automated}, and demonstrate that our approach effectively solves the code fragment selection problem without resorting to heuristics.

\begin{wrapfigure}{r}{0.32\columnwidth}
\vspace*{-1.3cm}
\scalebox{0.8}{
\begin{tikzpicture}[auto]
\node[data] (l0) {\footnotesize $L_0$};
\path (l0)+(3,0) node[data] (r0) {\footnotesize $R_0$};
\path (l0)+(0,-1.5) node[op,circle,inner sep=0.05cm] (xor) {\footnotesize \textsc{xor}};
\path (xor)+(1.5,0) node[opaque] (f) {\footnotesize $F$};
\path (f)+(1.5,0) node[] (junc) {};
\path (f)+(0, 1) node[data] (k0) {\footnotesize $K_0$};
\path (junc.center |- xor.south) node[] (r00) {};

\path[->] (l0) edge[] node[] {} (xor);
\path[->] (f) edge[] node[] {} (xor);
\path[-] (r0) edge[] node[] {} (junc.center);
\path[->] (junc.center) edge[] node[] {} (f);
\path[-] (junc.center) edge[] node[] {} (r00.center);
\path[->] (k0) edge[] node[] {} (f);

\path (xor)+(0,-2.5) node[op,circle,inner sep=0.05cm] (xor1) {\footnotesize \textsc{xor}};
\path (xor1)+(1.5,0) node[opaque] (f1) {\footnotesize $F$};
\path (f1)+(1.5,0) node[] (junc1) {};
\path (f1)+(0, 1) node[data] (k1) {\footnotesize $K_1$};
\path (k1)+(-1.5,-0.2) node[] (l1) {};
\path (k1)+(1.5,-0.2) node[] (r1) {};
\path (junc1.center |- xor1.south) node[] (r10) {};

\path[-] (r00.center) edge[in=90,out=270,looseness=0.8] node[] {} (l1.center);
\path[-] (xor) edge[in=90,out=270,looseness=0.8] node[] {} (r1.center);
\path[->] (l1.center) edge[] node[] {} (xor1);
\path[-] (r1.center) edge[] node[] {} (junc1.center);
\path[->] (junc1.center) edge[] node[] {} (f1);
\path[-] (junc1.center) edge[] node[] {} (r10.center);
\path[->] (k1) edge[] node[] {} (f1);
\path[->] (f1) edge[] node[] {} (xor1);

\path (xor1)+(1.5, -1.5) node[] (k2) {$\vdots$};
\path (k2)+(-1.5,-0.2) node[] (l2) {};
\path (k2)+(1.5,-0.2) node[] (r2) {};

\path[->] (r10.center) edge[in=90,out=270,looseness=0.8] node[] {} (l2.center);
\path[->] (xor1) edge[in=90,out=270,looseness=0.8] node[] {} (r2.center);
\end{tikzpicture}
}
\caption{DFG of a Feistel structure 
}
  \label{fig:feistel} 
  \vspace*{-0.5cm}
\end{wrapfigure}
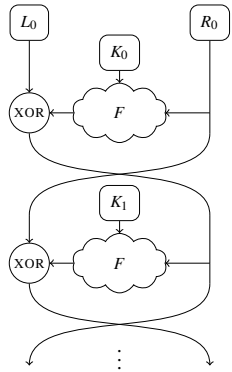

\subsection{Feistel cipher}\label{sec:feistel} A Feistel cipher is a symmetric structure used in many block ciphers, including DES. 
In a Feistel cipher, a plaintext block $P$ is split in two pieces $L_0$ and $R_0$.
Then, for each round $i \in [0, 1 , \dots , n]$,\\
$\begin{array}{ll}
~~~~~ & L_{i+1} = R_i\,\\
& R_{i+1} = L_i \oplus F(R_i, K_i),
\end{array}$\\
is computed, where $\oplus$ denotes bitwise exclusive-or, $F$ the round function, and $K_i$ the sub-key for round $i$. Translating this definition into a DFG yields the graph shown in Figure~\ref{fig:feistel}.

The next step is to construct a signature that represents the DFG from Figure~\ref{fig:feistel}. However, $F$ is an algorithm-specific set of operations, of which thus no properties are known. 
The \texttt{OPAQUE} operator (see Section~\ref{sec:dfgsearch}), only covers a single operation, whereas $F$ consists of an unknown number of operations. 
$F$ is known to take $R_i$ and $K_i$ as an input, where $i \in [0, 1, \dots, n]$. No properties are known for $K_i$. Hence, we represent $F$ by introducing multiple variants of the signature. In the first variant, we substitute $F$ with $\texttt{OPAQUE}(R_i)$, in the second with $\texttt{OPAQUE}(\texttt{OPAQUE}(R_i))$, etc., until we reach $8$ levels of nested operations. Thus, the signature identifies Feistel ciphers with an $F$ whose input/output relation contains between $1$ and $8$ successive operations.

\subsection{(Non-)Linear feedback shift register} (Non-)Linear feedback shift registers ((N)LFSRs) are often used in pseudo-random number generators, and key-stream generators for stream ciphers. When designed carefully, an (N)LFSR offers relatively strong randomness, whilst requiring very few logic gates, often making it an attractive choice for algorithms used in embedded devices. Both hardware and software implementations of (N)LFSRs are common.

Let $R$ be an (N)LFSR. For each round, a new bit is generated using feedback function $L$ from (a subset of) the bits in $R$. If $L$ is linear, for e.g. an exclusive-or over the input bits, we refer to $R$ as an LFSR. Conversely, $R$ is an NLFSR if $L$ is non-linear. All bits in register $R$ are shifted one position to the left, discarding the most significant bit, and the newly generated bit is placed at position $0$. 
Furthermore, an output bit is generated by feeding $R$ to some function $F$.
Hence, we have,
for each round $i \in [0, 1, \dots, n]$, \\[-0.35cm]
\begin{wrapfigure}{r}{0.4\columnwidth}
\scalebox{0.8}{
\begin{tikzpicture}[auto]
\node[data] (r0) {\footnotesize $R_0$};
\path (r0)+(2,0) node[opaque] (F) {\footnotesize $F$};
\path (r0)+(2,-1) node[data] (one) {\footnotesize 1};
\path (r0)+(1,-1) node[op,circle,inner sep=0.1cm] (lsh) {\footnotesize \textsc{\textless{}\textless{}}};
\path (lsh)+(-1,-1) node[op,circle,inner sep=0.1cm] (r1) {\footnotesize \textsc{or}};
\path (r0)+(-1,-1) node[opaque] (L) {\footnotesize $L$};

\path[->] (r0) edge[out=195,in=75] node[] {} (L);
\path[->] (r0) edge[] node[] {} (F);
\path[->] (r0) edge[out=345,in=105] node[] {} (lsh);
\path[->] (one) edge[] node[] {} (lsh);
\path[->] (L) edge[out=285,in=165] node[] {} (r1);
\path[->] (lsh) edge[out=255,in=15] node[] {} (r1);

\path (r1)+(2,0) node[opaque] (F1) {\footnotesize $F$};
\path (r1)+(2,-1) node[data] (one1) {\footnotesize 1};
\path (r1)+(1,-1) node[op,circle,inner sep=0.1cm] (lsh1) {\footnotesize \textsc{\textless{}\textless{}}};
\path (lsh1)+(-1,-1) node[op,circle,inner sep=0.1cm] (r2) {\footnotesize \textsc{or}};
\path (r1)+(-1,-1) node[opaque] (L1) {\footnotesize $L$};

\path[->] (r1) edge[out=195,in=75] node[] {} (L1);
\path[->] (r1) edge[] node[] {} (F1);
\path[->] (r1) edge[out=345,in=105] node[] {} (lsh1);
\path[->] (one1) edge[] node[] {} (lsh1);
\path[->] (L1) edge[out=285,in=165] node[] {} (r2);
\path[->] (lsh1) edge[out=255,in=15] node[] {} (r2);

\path (r2)+(2,0) node[opaque] (F2) {\footnotesize $F$};
\path (r2)+(0.9,-0.6) node[] (fade) {};
\path (r2)+(-0.9,-0.6) node[] (fade1) {};
\path (r2)+(0,-0.6) node[] (dots) {$\vdots$};

\path[->] (r2) edge[] node[] {} (F2);
\path[->] (r2) edge[out=195,in=75] node[] {} (fade1.center);
\path[->] (r2) edge[out=345,in=105] node[] {} (fade.center);
\end{tikzpicture}
}
\caption{DFG of an (N)LFSR}
  \label{fig:lfsr}
  \vspace{-0.33cm}
\end{wrapfigure}
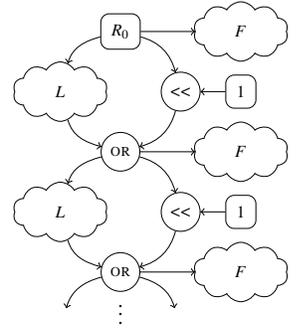
$\arraycolsep=1.4pt\begin{array}{rl}
R_{i+1} &= (R_i \text{~\textless{}\textless{}~} 1)~|~L(R_i)\\
~~~\text{output}_i &= F(R_i),
\end{array}$\\
where $\text{\textless{}\textless{}~} x$ denotes a left shift by $x$ bits and $|$ denotes bitwise or.

Figure~\ref{fig:lfsr} depicts a translation of the above into a DFG. In order to express this graph in a signature, we replace $L$ and $F$ with \texttt{OPAQUE} operators. The property that $R_{i+1}$ depends on $R_i$ via $L$ is lost. However, the signature remains distinctive enough in order to warrant very few false positives (see Section~\ref{sec:evaluation}).

\subsection{Sequential Block Permutation}\label{sec:merkledamgard}
Variable-length primitives constructed from fixed-length ones are a common phenomenon. For e.g., all hash functions built on the Merkle-Damg{\aa}rd construction, such as MD5, SHA1 and SHA2, have this characteristic. Other examples include block ciphers in a chaining mode of operation. We refer to this concept as a \emph{sequential block permutation}.

Let $H_i$ be the $i^{\text{th}}$ output block of a sequential block permutation function, $B_i$ be the $i^{\text{th}}$ input block, $c$ be the fixed-length compression function, for $i \in [0, 1, \dots, n]$. $I$ denotes the initialization vector. Then, we define the sequential block permutation as:\\
$\arraycolsep=1.4pt
\begin{array}{rlp{0.5cm}l}
~~~~~~~~~~~~~~~~~~~~~~H_0 & = c(I, B_0)\\
H_i & = c(H_{i-1}, B_i) & &  \forall_i \in [1, n]
\end{array}
$


\noindent A DFG representation is given in Figure~\ref{fig:merkledamgard}. On inspection, we find that it only provides structural guidance, and does not prescribe any arithmetic or logic operations.
The definition of $H$ prescribes that compression function $c$ takes two inputs:
\begin{enumerate}[(i)]
\vspace{-1mm}
 \item The output of its preceding instance, except for the first instance, which depends on the IV.
 \vspace{-2mm}
 \item Any of the input blocks $B_0, B_1, \dots, B_n$.
 \vspace{-1mm}
\end{enumerate}

\begin{wrapfigure}{r}{0.31\columnwidth}
\scalebox{0.8}{
\begin{tikzpicture}[auto]
\node[data] (i) {\footnotesize $I$};
\path (i)+(2,-1) node[data] (b0) {\footnotesize $B_0$};
\path (i)+(0,-1) node[opaque] (c) {\footnotesize $c$};

\path[->] (i) edge[] node[] {} (c);
\path[->] (b0) edge[] node[] {} (c);

\path (c)+(2,-1.1) node[data] (b1) {\footnotesize $B_1$};
\path (c)+(0,-1.1) node[opaque] (c1) {\footnotesize $c$};

\path[->] (c) edge[] node[] {} (c1);
\path[->] (b1) edge[] node {} (c1);

\path (c1)+(2,-1.1) node[data] (b2) {\footnotesize $B_2$};
\path (c1)+(0,-1.1) node[opaque] (c2) {\footnotesize $c$};

\path[->] (c1) edge[] node[] {} (c2);
\path[->] (b2) edge[] node {} (c2);

\path (c2)+(0,-0.55) node[] (junc) {};
\path[->] (c2) edge[] node[] {} (junc.center);

\path (c2)+(0,-0.8) node[] (k2) {$\vdots$};

\path (junc.center)+(0,-0.55) node[] (junc1) {};
\path (junc1.center)+(0,-0.55) node[opaque] (cn) {\footnotesize $c$};
\path (cn)+(2,0) node[data] (bn) {\footnotesize $B_n$};

\path[->] (junc1.center) edge[] node[] {} (cn);
\path[->] (bn) edge[] node[] {} (cn);

\path[->] (b0) edge[bend left=15,cyan,dashed] node[] {} (c);
\path[->] (c) edge[bend left=15,cyan,dashed] node[] {} (c1);
\path[->] (c1) edge[bend left=15,cyan,dashed] node[] {} (b1);

\path[->] (b1) edge[bend left=15,cyan,dashed] node[] {} (c1);
\path[->] (c1) edge[bend left=15,cyan,dashed] node[] {} (c2);
\path[->] (c2) edge[bend left=15,cyan,dashed] node[] {} (b2);

\end{tikzpicture}
}
\caption{DFG of a sequential block permutation. The blue arrows depict the visitation order by the classifier}
  \label{fig:merkledamgard} 
\end{wrapfigure}
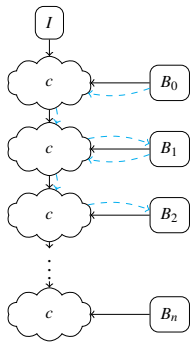
In order to express this in a signature definition, we may opt for an approach similar to how the Feistel cipher signature definition is constructed. However, Figure~\ref{fig:merkledamgard} does not contain any operation that serves as an `anchor point' for $c$, analogous to the XOR-operation in the Feistel structure. As such, any pattern of repeated operations satisfies property (i), which is overtly generic. Hence, we must also take property (ii) into account.
Let $c_i$ be the $i^{\text{th}}$ instance of $c$. The number of arithmetic/logical operations on the path between $c_{i-1}$ and $c_i$ need not be related to that of the path between input block $B_i$ and $c_i$.
Therefore, in order to translate $c$ into multiple variants of the signature, we have to perform a translation for both paths independently. Note that the number of variants grows exponentially in the number of translations. On top of that, the compression function $c$ can be vastly more complex than a round function in a Feistel cipher. For e.g., the MD5 compression function in itself consists of $64$ rounds. Therefore, the upper bound of the number of operations that $c$ may consist of is an order of magnitude higher than what one would typically find in a Feistel cipher's round function. All in all, the number of signature variants, and therewith the running time of the analysis, becomes prohibitively large.

Fortunately, there is no need to restrict ourselves to subgraph isomorphism as a means of identifying primitives. Rather, we can apply any algorithm to the DFGs generated by the graph construction framework, which is our approach for the sequential block permutation use case.
We take several observations into account.
First, input blocks $B_0, B_1, \dots, B_n$ are typically loaded from a memory address. Second, $c$ has a fixed (unknown) block size, and thus we can safely assume that the offsets between the load addresses of $B_i$, $B_{i+1}$ and $B_{i+2}$ are constant.
We take the following approach:
\begin{enumerate}[(i)]
\vspace{-1mm}
 \item We identify all nodes representing $\texttt{LOAD}(\texttt{ADD}(x, k))$, where $x$ is an arbitrary graph node, and $k$ is a constant. For each instance of $x$, we construct a list of tuples $(v_0, v_1, v_2)$, where $v_i$ represents $\texttt{LOAD}(\texttt{ADD}(x, k_i))$. A tuple is valid only if $k_1 - k_0 \geq 16 \land k_1 - k_0 = k_2 - k_1$, i.e. the offsets between $v_0, v_1$ and $v_2$ are constant, and at least $16$ bytes. As such, a DFG generated from a sequential block permutation function yields at least one tuple such that $v_i$ maps to $B_i$, for all $i \in [0, 1, 2]$.
\vspace{-2mm}
 \item For all tuples, we determine the shortest path between $v_0$ and $v_1$. This can be done by means of a simple breadth-first search. If $v_0$ maps to $B_0$ and $v_1$ to $B_1$, then this path should take us through two instances of $c$ (see Figure~\ref{fig:merkledamgard}).%
\vspace{-2mm}\item Suppose that such a path exists, then we would like to confirm that a similar path exists between $v_1$ and $v_2$. We take $v_1$ as a starting point, and traverse paths with edge directions and node types resembling those on the path between $v_0$ and $v_1$. Once such a path has been found, it should reach $v_2$. Satisfaction of this property is a strong positive indicator.
 \vspace{-2mm}
 \item To gain more certainty, we also verify that the node types of all inputs and outputs for all the nodes on both paths match. However, in case $v_0$ maps to $B_0$, some inputs may originate from the IV, whereas they originate from computed values during the second round. Therefore, we treat constants and inputs of type \texttt{LOAD} as wildcards in this step.
 \vspace{-1mm}
\end{enumerate}

\section{Experimental evaluation}\label{sec:evaluation}
We evaluate our solution's performance with regards to accuracy and running time on the following four test sets: (a) the sample set used in~\cite{lestringant2015automated}, (b) a collection of shared libraries and executables part of the OpenWRT\footnote{\url{https://openwrt.org/docs/techref/targets/mvebu}} network equipment firmware, (c) a collection of proprietary cipher implementations built from public sources, and (d) a collection of real-world embedded firmwares  
(PLCs, ECUs). The evaluation is conducted on an AMD Ryzen 3600 machine with 16 GB of RAM, which is considered mid-range hardware nowadays. 


While not containing proprietary cryptography, the OpenWRT project is publicly available without legal issues around redistribution, contrary to firmwares which do. As such, this evaluation benefits the reproducibility of our work, as well as demonstrates the general principle, accuracy and performance on a test set representative of high-end embedded device firmware.
Given the uncertainty over the legality of redistribution, we refer to the original sources of the proprietary cipher implementations rather than publish our binary test set. Due to copyright restrictions, we unfortunately lack permission to publish the real-world embedded firmwares.

Section~\ref{sec:pathoraclepolicy} defines a tunable variable $n$, the target number of instances of an algorithm contained within a DFG. The value chosen for $n$ should be low as it correlates with the size of the constructed DFGs, and hence running time, but high enough so that all signatures listed in Section~\ref{sec:signatures} can be identified. The algorithm-specific and Feistel classifiers only target a single instance of an algorithm, and hence are not affected by $n$. Conversely, the (N)LFSR and sequential block permutation classifiers are, as they identify a primitive based on multiple instances. The latter identifies two successive instances of some unknown compression function $c$. Because the rewrite rules are designed to promote numeric simplification (Section~\ref{sec:dfgconstruct}), the initialization and finalization step of an algorithm may become merged with the first and last instance of $c$, respectively. Thus, by choosing $n=4$, the presence of two successive instances of $c$ in the DFG is warranted. Choosing a value beyond $4$ clearly does not offer any advantages regarding this property. Furthermore, identifying $4$ successive rounds of an (N)LSFR in a DFG produced from code that does not actually implement one is highly unlikely. Therefore, for the remainder of this section, we take $n=4$.


\subsection{Comparison with Lestringant et al.}
Lestringant et al.~\cite{lestringant2015automated} showcase the effectiveness of their method by successfully identifying AES, MD5 and XTEA in binary files. Unfortunately, their sample set was never published, and is compiled for x86, which our implementation currently does not support. Therefore, we constructed a new sample set for the ARM architecture that is as faithful as possible to theirs. The algorithms are taken from the cited sources\footnote{\url{https://en.wikipedia.org/w/index.php?title=XTEA}}\textsuperscript{,}\footnote{\url{https://tools.ietf.org/html/rfc1321}}\textsuperscript{,}\footnote{\url{https://github.com/BrianGladman/AES}}, and subsequently compiled with GCC 9.3.0, Clang 9.0.8, and MSVC 19.16 on all available optimization levels (O0--O3, debug/release). We use algorithm-specific signatures in order to warrant a fair comparison. The results are depicted in Table~\ref{tab:lestringanttable}. They show that all samples are identified successfully by (a variant of) their corresponding signatures, regardless of compiler and optimization level. This effectively demonstrates that our approach is equally capable of identifying these algorithms, without resorting to heuristics for fragment selection.

\begin{table}[t!]
\begin{threeparttable}[b]
\centering
\vspace{-0.3cm}
\scalebox{0.75}{\begin{tabular}{c c c c c c}
Signature & \hspace{-0.15cm}Compiler\hspace{-0.15cm} & \shortstack{-O0 /\\Debug} & -O1 & \shortstack{-O2 /\\Release} & -O3\\
\hline\\[-0.3cm]
\multirow{3}{*}{\shortstack{\textbf{XTEA}\\\footnotesize{4 rounds}\\\footnotesize{\hspace{-0.15cm}70 vertices\hspace{-0.15cm}}}} & \small{GCC} &\multicolumn{1}{l}{\hspace{0.2cm}ok \scriptsize{\textcolor{darkgray}{ ($1\text{ms}$)}}}\hspace{-0.4cm} & \multicolumn{1}{l}{\hspace{0.2cm}ok \scriptsize{\textcolor{darkgray}{ ($2\text{ms}$)}}}\hspace{-0.4cm} & \multicolumn{1}{l}{\hspace{0.2cm}ok \scriptsize{\textcolor{darkgray}{ ($2\text{ms}$)}}}\hspace{-0.4cm} & \multicolumn{1}{l}{\hspace{0.2cm}ok \scriptsize{\textcolor{darkgray}{ ($2\text{ms}$)}}}\hspace{-0.4cm}\\
 & \small{Clang} &\multicolumn{1}{l}{\hspace{0.2cm}ok \scriptsize{\textcolor{darkgray}{ ($1\text{ms}$)}}}\hspace{-0.4cm} & \multicolumn{1}{l}{\hspace{0.2cm}ok \scriptsize{\textcolor{darkgray}{ ($2\text{ms}$)}}}\hspace{-0.4cm} & \multicolumn{1}{l}{\hspace{0.2cm}ok \scriptsize{\textcolor{darkgray}{ ($2\text{ms}$)}}}\hspace{-0.4cm} & \multicolumn{1}{l}{\hspace{0.2cm}ok \scriptsize{\textcolor{darkgray}{ ($2\text{ms}$)}}}\hspace{-0.4cm}\\
 & \small{MSVC} &\multicolumn{1}{l}{\hspace{0.2cm}ok \scriptsize{\textcolor{darkgray}{ ($1\text{ms}$)}}}\hspace{-0.4cm} &  - & \multicolumn{1}{l}{\hspace{0.2cm}ok \scriptsize{\textcolor{darkgray}{ ($2\text{ms}$)}}}\hspace{-0.4cm} & - \\
\multirow{3}{*}{\shortstack{\textbf{MD5}\\\footnotesize{64 rounds}\\\footnotesize{\hspace{-0.15cm}458-618 vertices\hspace{-0.15cm}}}} & \small{GCC} &\multicolumn{1}{l}{\hspace{0.2cm}ok \scriptsize{\textcolor{darkgray}{ ($267\text{ms}$)}}}\hspace{-0.4cm} & \multicolumn{1}{l}{\hspace{0.2cm}ok \scriptsize{\textcolor{darkgray}{ ($335\text{ms}$)}}}\hspace{-0.4cm} & \multicolumn{1}{l}{\hspace{0.2cm}ok \scriptsize{\textcolor{darkgray}{ ($345\text{ms}$)}}}\hspace{-0.4cm} & \multicolumn{1}{l}{\hspace{0.2cm}ok \scriptsize{\textcolor{darkgray}{ ($348\text{ms}$)}}}\hspace{-0.4cm}\\
 & \small{Clang} &\multicolumn{1}{l}{\hspace{0.2cm}ok \scriptsize{\textcolor{darkgray}{ ($286\text{ms}$)}}}\hspace{-0.4cm} & \multicolumn{1}{l}{\hspace{0.2cm}ok \scriptsize{\textcolor{darkgray}{ ($241\text{ms}$)}}}\hspace{-0.4cm} & \multicolumn{1}{l}{\hspace{0.2cm}ok \scriptsize{\textcolor{darkgray}{ ($272\text{ms}$)}}}\hspace{-0.4cm} & \multicolumn{1}{l}{\hspace{0.2cm}ok \scriptsize{\textcolor{darkgray}{ ($265\text{ms}$)}}}\hspace{-0.4cm}\\
 & \small{MSVC} &\multicolumn{1}{l}{\hspace{0.2cm}ok \scriptsize{\textcolor{darkgray}{ ($269\text{ms}$)}}}\hspace{-0.4cm} &  - & \multicolumn{1}{l}{\hspace{0.2cm}ok \scriptsize{\textcolor{darkgray}{ ($322\text{ms}$)}}}\hspace{-0.4cm} & - \\
\multirow{3}{*}{\shortstack{\textbf{AES}\\\footnotesize{1 round}\\\footnotesize{\hspace{-0.15cm}85-110 vertices\hspace{-0.15cm}}}} & \small{GCC} &\multicolumn{1}{l}{\hspace{0.2cm}ok \scriptsize{\textcolor{darkgray}{ ($64\text{ms}$)}}}\hspace{-0.4cm} & \multicolumn{1}{l}{\hspace{0.2cm}ok \scriptsize{\textcolor{darkgray}{ ($61\text{ms}$)}}}\hspace{-0.4cm} & \multicolumn{1}{l}{\hspace{0.2cm}ok \scriptsize{\textcolor{darkgray}{ ($53\text{ms}$)}}}\hspace{-0.4cm} & \multicolumn{1}{l}{\hspace{0.2cm}ok \scriptsize{\textcolor{darkgray}{ ($56\text{ms}$)}}}\hspace{-0.4cm}\\
 & \small{Clang} &\multicolumn{1}{l}{\hspace{0.2cm}ok \scriptsize{\textcolor{darkgray}{ ($37\text{ms}$)}}}\hspace{-0.4cm} & \multicolumn{1}{l}{\hspace{0.2cm}ok \scriptsize{\textcolor{darkgray}{ ($32\text{ms}$)}}}\hspace{-0.4cm} & \multicolumn{1}{l}{\hspace{0.2cm}ok \scriptsize{\textcolor{darkgray}{ ($32\text{ms}$)}}}\hspace{-0.4cm} & \multicolumn{1}{l}{\hspace{0.2cm}ok \scriptsize{\textcolor{darkgray}{ ($27\text{ms}$)}}}\hspace{-0.4cm}\\
 & \small{MSVC} &\multicolumn{1}{l}{\hspace{0.2cm}ok \scriptsize{\textcolor{darkgray}{ ($30\text{ms}$)}}}\hspace{-0.4cm} &  - & \multicolumn{1}{l}{\hspace{0.2cm}ok \scriptsize{\textcolor{darkgray}{ ($42\text{ms}$)}}}\hspace{-0.4cm} & - \\
\end{tabular}
}

\caption{Signature matching step execution times, sample set of Lestringant et al.}
\vspace{-0.3cm}
 \label{tab:lestringanttable}
\end{threeparttable}
\end{table}

\subsection{Performance on OpenWRT binaries}

\begin{figure*}[!t]
\hspace*{-0.4cm}
 \begin{subfigure}{0.325\textwidth}
  \includegraphics[trim=0 0 0 0,clip,width=\textwidth]{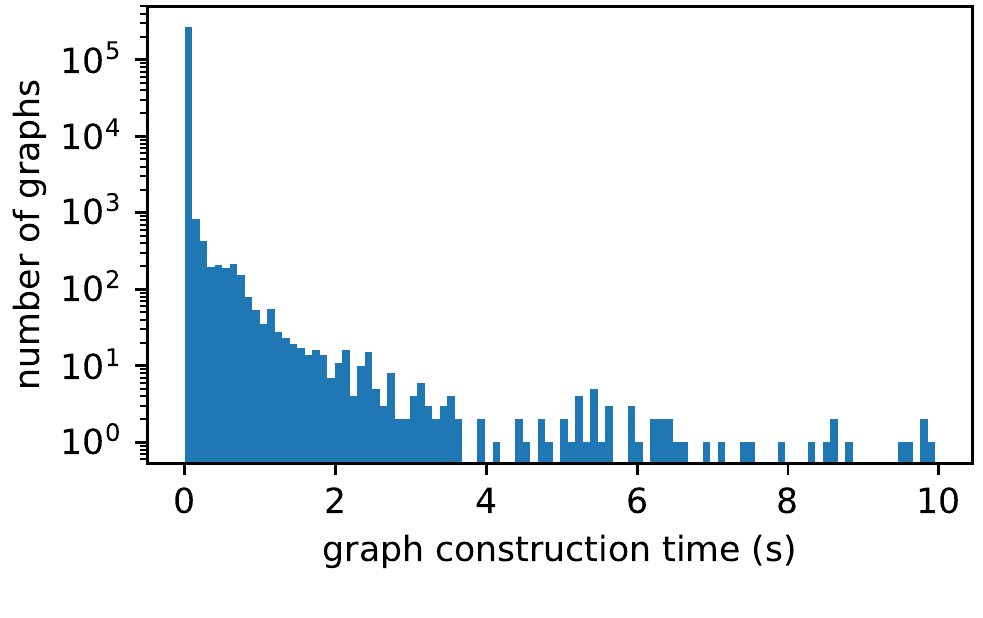}
  \caption{Histogram of graph construction}
  \label{fig:histogram} 
 \end{subfigure}
 \begin{subfigure}{0.251\textwidth}
  \includegraphics[trim=0 0 0 0,clip,width=\textwidth]{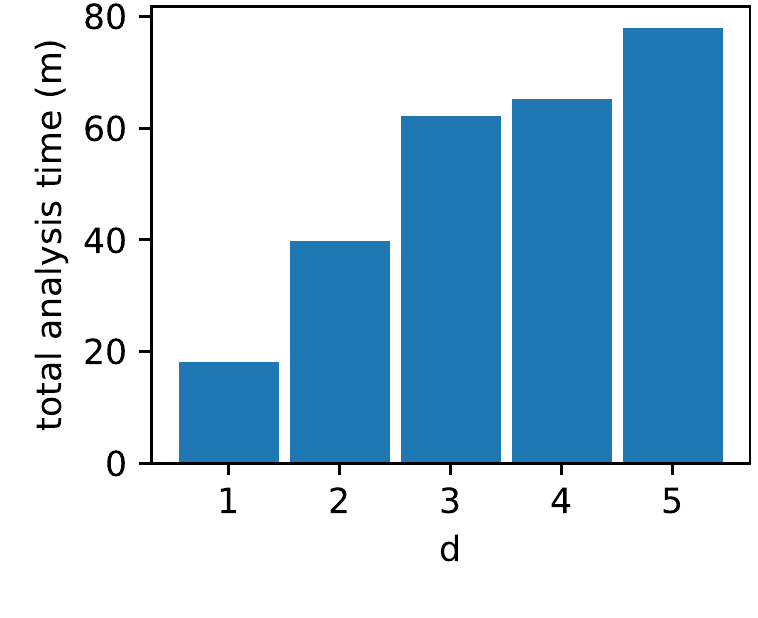}
  \caption{Inline depth $d$ vs analysis time}
  \label{fig:analysistime}
 \end{subfigure}
 \begin{subfigure}{0.435\textwidth}
  \hspace*{0.2cm}
  \includegraphics[trim=0 0 0 0,clip,width=\textwidth]{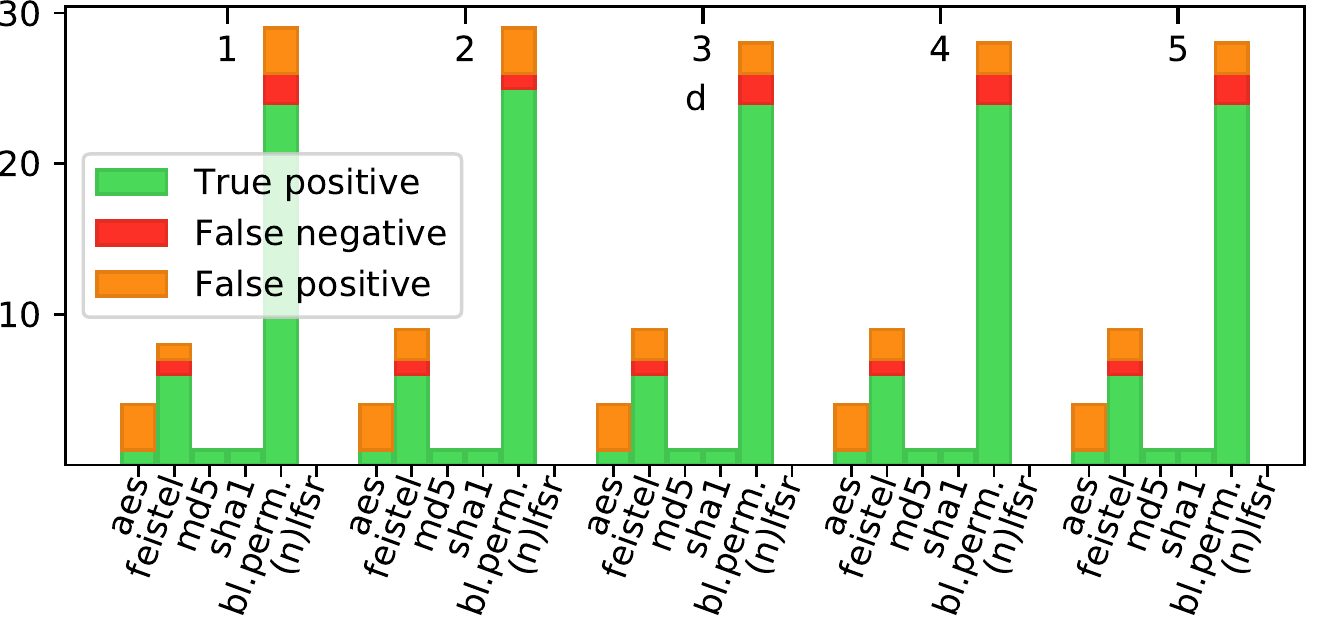}
  \caption{Inline depth $d$ vs accuracy}
  \label{fig:accuracy}
 \end{subfigure}
 \caption{Effect of inline depth $d$ and $t_\text{timeout}$ for libcrypto.so.1.1}
\vspace{-0.33cm}
\end{figure*}



The version of OpenWRT used is 19.07.2, which is the latest at time of writing. The sample set consists of several binaries taken from the distribution and is known to contain cryptographic primitives. 

DFG construction from binary code (Section~\ref{sec:dfgconstruct}) is a special case of execution, and is thus affected by the \emph{halting} problem. As such, graph-construction is not guaranteed to terminate. Therefore, we introduce a graph construction timeout $t_{\text{timeout}}$. Figure~\ref{fig:histogram} depicts a histogram of graph construction time $t$ for all graphs constructed during the analysis of libcrypto.so.1.1. It shows that, for the vast majority of all graphs, construction completes within $10s$. Thus, we take $t_{\text{timeout}} = 10s$.

Furthermore, we must decide what action to take when the function under analysis invokes another function. Either we perform inlining, and hence incorporate the entire invocation in the resulting DFG, or we represent it by a single \texttt{CALL} operation. To address this issue, we define a tunable variable $d$, denoting the depth level to which function calls are inlined. We investigate the impact of $d$ by running the analysis on \emph{libcrypto.so.1.1}, while taking on different values, and measuring performance in terms of running time and accuracy. We then choose a sensible value based on a trade-off between the two, and use it for the remainder of this section. Figure~\ref{fig:analysistime} depicts the time taken to complete the entire analysis pipeline over every function in libcrypto.so.1.1, under the influence of $d$. Figure~\ref{fig:accuracy} contains accuracy measurements for each signature. True negatives are omitted since they cover an overwhelming majority of results, and thus impact readability.

Recall that the signature evaluation is performed on graphs, and the graph construction step may yield several graphs. As such, several signature evaluation results may exist per function. The measurements provided in Figure~\ref{fig:accuracy} are aggregated on a per-function level.

Let $\fu$ be any function in the binary under analysis, and let signature $s_\alpha$ denote a signature targeting primitive $\alpha$. Furthermore, let $F$ be the set of DFGs generated from $\fu$ during the graph construction phase. Finally, $\text{match}(s_\alpha, G)$ indicates that signature $s_\alpha$ was identified in graph $G$, $\text{imp}(\fu, \alpha)$ denotes that $\fu$ implements cryptographic primitive $\alpha$.

\begin{table}[!b]

\vspace{-0.33cm}
\begin{threeparttable}[b]
\scalebox{0.51}{\begin{tabular}{l c c c c}
\multicolumn{2}{l}{\textbf{Algorithm}}\\
signature & \multicolumn{1}{c}{dropbear} & \multicolumn{1}{c}{libcrypto.so.1.1} & \multicolumn{1}{c}{libmbedcrypto.so.2.16.3\tnote{1}} & \multicolumn{1}{c}{libnettle.so.7.0\tnote{2}}\\
\hline\\[-0.3cm]
size & \multicolumn{1}{c}{145 KB} & \multicolumn{1}{c}{1,735 KB} & \multicolumn{1}{c}{197 KB} & \multicolumn{1}{c}{237 KB}\\
\parbox{0.1cm}{analysis~time} & \multicolumn{1}{c}{6m44s} & \multicolumn{1}{c}{39m47s} & \multicolumn{1}{c}{6m56s} & \multicolumn{1}{c}{11m32s}\\
\hline\\[-0.3cm]
\multicolumn{2}{l}{\textbf{SHA1}}\\
sha1    & \cellcolor{green!25} \checkmark~Unlabeled\tnote{3}   & \cellcolor{green!25} \checkmark~SHA1\_Update  & \cellcolor{green!25} \checkmark~sha1\_update\_ret     & \cellcolor{green!25} \checkmark~sha1\_compress\\
bl.perm.  & \cellcolor{green!25} \checkmark~Unlabeled\tnote{3}  & \cellcolor{green!25} \checkmark~SHA1\_Update  & \cellcolor{green!25} \checkmark~sha1\_update\_ret    & \cellcolor{green!25} \checkmark~sha1\_update\tnote{4}\\
\multicolumn{2}{l}{\textbf{SHA256}}\\
bl.perm.  & \cellcolor{green!25} \checkmark~Unlabeled\tnote{3}        & \cellcolor{green!25} \checkmark~SHA256\_Update\tnote{5}      & \cellcolor{green!25} \checkmark~sha256\_update\_ret  & \cellcolor{green!25} \checkmark~sha256\_update\tnote{4,5}\\
\multicolumn{2}{l}{\textbf{AES}}\\
aes     & \cellcolor{green!25} \checkmark~Unlabeled\tnote{3}     & \cellcolor{green!25} \checkmark~AES\_encrypt  & \cellcolor{green!25} \checkmark~aes\_encrypt & \cellcolor{green!25} \checkmark~aes\_encrypt\_armv6\\
\multicolumn{2}{l}{\textbf{MD4}}\\
bl.perm.  & \cellcolor{gray!25} N/A       & \cellcolor{green!25} \checkmark~MD4\_Update   & \cellcolor{gray!25} N/A       & \cellcolor{green!25} \checkmark~md4\_update\tnote{4}\\
\multicolumn{2}{l}{\textbf{MD5}}\\
md5     & \cellcolor{gray!25} N/A       & \cellcolor{green!25} \checkmark~MD5\_Update   & \cellcolor{green!25} \checkmark~md5\_update\_ret     & \cellcolor{green!25} \checkmark~hmac\_md5\_update\\
bl.perm.  & \cellcolor{gray!25} N/A       & \cellcolor{green!25} \checkmark~MD5\_Update   & \cellcolor{green!25} \checkmark~md5\_update\_ret     & \cellcolor{green!25} \checkmark~hmac\_md5\_update\\
\multicolumn{2}{l}{\textbf{RIPEMD160}}\\
bl.perm.  & \cellcolor{gray!25} N/A       & \cellcolor{green!25} \checkmark~RIPEMD160\_Update     & \cellcolor{gray!25} N/A       & \cellcolor{green!25} \checkmark~hmac\_ripemd160\_update\\
\multicolumn{2}{l}{\textbf{SHA512}}\\
bl.perm.  & \cellcolor{gray!25} N/A       & \cellcolor{green!25} \checkmark~SHA512\_Update\tnote{5}     & \cellcolor{green!25} \checkmark~sha512\_process\tnote{5}    & \cellcolor{green!25} \checkmark~sha512\_update\tnote{5}\\
\multicolumn{2}{l}{\textbf{SM3}}\\
bl.perm.  & \cellcolor{gray!25} N/A       & \cellcolor{green!25} \checkmark~sm3\_block\_data\_order       & \cellcolor{gray!25} N/A       & \cellcolor{gray!25} N/A\\
\multicolumn{2}{l}{\textbf{BLOWFISH}}\\
feistel & \cellcolor{gray!25} N/A       & \cellcolor{green!25} \checkmark~BF\_encrypt   & \cellcolor{green!25} \checkmark~blowfish\_crypt\_ecb\tnote{4}      & \cellcolor{green!25} \checkmark~blowfish\_encrypt\\
\multicolumn{2}{l}{\textbf{CAMELLIA}}\\
feistel & \cellcolor{gray!25} N/A       & \cellcolor{green!25} \checkmark~Camellia\_EncryptBlock        & \cellcolor{gray!25} N/A       & \cellcolor{green!25} \checkmark~camellia\_crypt\\
\multicolumn{2}{l}{\textbf{CAST}}\\
feistel & \cellcolor{gray!25} N/A       & \cellcolor{green!25} \checkmark~CAST\_ecb\_encrypt    & \cellcolor{gray!25} N/A       & \cellcolor{green!25} \checkmark~cast128\_encrypt\\
\multicolumn{2}{l}{\textbf{DES}}\\
feistel & \cellcolor{gray!25} N/A       & \cellcolor{green!25} \checkmark~DES\_encrypt2 & \cellcolor{gray!25} N/A       & \cellcolor{green!25} \checkmark~des\_encrypt\\
\multicolumn{2}{l}{\textbf{RC2}}\\
feistel & \cellcolor{gray!25} N/A       & \cellcolor{red!25} \xmark~RC2\_encrypt        & \cellcolor{gray!25} N/A       & \cellcolor{gray!25} N/A\\
\multicolumn{2}{l}{\textbf{SEED}}\\
feistel & \cellcolor{gray!25} N/A       & \cellcolor{green!25} \checkmark~SEED\_encrypt & \cellcolor{gray!25} N/A       & \cellcolor{gray!25} N/A\\
\multicolumn{2}{l}{\textbf{SM4}}\\
feistel & \cellcolor{gray!25} N/A       & \cellcolor{green!25} \checkmark~SM4\_encrypt  & \cellcolor{gray!25} N/A       & \cellcolor{gray!25} N/A\\
\multicolumn{2}{l}{\textbf{GOST}}\\
feistel & \cellcolor{gray!25} N/A       & \cellcolor{gray!25} N/A       & \cellcolor{gray!25} N/A       & \cellcolor{green!25} \checkmark~gosthash94\_digest\\
\multicolumn{2}{l}{\textbf{MD2}}\\
bl.perm.  & \cellcolor{gray!25} N/A       & \cellcolor{gray!25} N/A       & \cellcolor{gray!25} N/A       & \cellcolor{green!25} \checkmark~md2\_update\\
\multicolumn{2}{l}{\textbf{TWOFISH}}\\
feistel & \cellcolor{gray!25} N/A       & \cellcolor{gray!25} N/A       & \cellcolor{gray!25} N/A       & \cellcolor{red!25} \xmark~twofish\_encrypt\\
\multicolumn{2}{l}{\textbf{SHA3}}\\
bl.perm. & \cellcolor{gray!25} N/A       & \cellcolor{green!25} \checkmark~SHA3\_absorb       & \cellcolor{gray!25} N/A       & \cellcolor{green!25} \checkmark~sha3\_update\tnote{4}
\end{tabular}
}

\begin{tablenotes}
\item[1] \footnotesize Symbols prefixed with \texttt{mbedtls\_}
\item[2] \footnotesize Symbols prefixed with \texttt{nettle\_}
\item[3] \footnotesize \parbox{0.92\columnwidth}{Misclassified by IDA as an integer array. Manual cast to function required.}
\item[4] \footnotesize Positive match for $d \geq 4$.
\item[5] \footnotesize Positive match for $t_{\text{timeout}} \geq 30s$.
\end{tablenotes}

\caption{Analysis result for various binaries in OpenWRT}
 \label{tab:analysis}
\end{threeparttable}

\end{table}

A result is marked as a \emph{true positive} if $\text{imp}(\fu, \alpha) \land \exists_G. G \in F \land \text{match}(s_\alpha, G)$, i.e. $\fu$ implements cryptographic primitive $\alpha$, and its signature is found in \emph{at least one} graph in $F$. Indeed, there is no guarantee that all DFGs in $F$ contain algorithm $\alpha$, and hence it is expected that the signature is not found in every graph in $F$. A result is marked as a \emph{false positive} if $\neg\text{imp}(\fu, \alpha) \land \exists_G. G \in F \land \text{match}(s_\alpha, G)$, i.e. $\fu$ does not implement primitive $\alpha$, yet its signature is found in at least one graph in $F$. A result is marked as a \emph{true negative} if $\neg\text{imp}(\fu, \alpha) \land \neg\exists_G. G \in F \land \text{match}(s_\alpha, G)$. A result is a \emph{false negative} if $\text{imp}(\fu, \alpha) \land \neg\exists_G. G \in F \land \text{match}(s_\alpha, G)$.



The results in Figure~\ref{fig:accuracy} show that accuracy does not substantially improve when choosing $d > 2$. However, doing so does impact the running time. We conclude that, for libcrypto.so.1.1, $d=2$ is a reasonable trade-off between accuracy and running time. As such, we continue to use $d=2$ for the remainder of this section, unless specified otherwise.

At this point, sensible values for $n$, $d$ and $t_{\text{timeout}}$ have been selected. We continue the evaluation by feeding the entire set of OpenWRT binaries to our analysis framework. The results are listed in Table~\ref{tab:analysis}. Each cell in the table depicts the symbol name in the corresponding binary of the first positive result, or, in case of a false negative, the symbol name where a positive result is expected.
%
The results indicate our solution is capable of successfully identifying the vast majority of cryptographic primitives present in various binaries in a timely manner. Should accuracy take precedence over performance, it is possible to tune the parameters to improve detection.

\subsubsection{Discussion of invalid results}\label{subsubsec:discussion}
Table~\ref{tab:analysis} and Figure~\ref{fig:accuracy} contain several false positives and false negatives. In order to gain insights in the limitations of our approach, we highlight those instances here. 

\vspace{-3mm}
\paragraph{False negatives}
RC2 uses a regular addition, i.e. with carry over, rather than XOR, whereas the Feistel signature highlighted in Section~\ref{sec:feistel} relies on the XOR operation being present. Therefore, RC2 is not identified as a Feistel cipher.

Furthermore, SHA512 is consistently among the false negatives for the sequential block permutation class of primitives. This is due to a DFG consisting of $n$ (i.e. 4) instances of SHA512 being required for successful identification. However, said DFG consists of over 1,000,000 vertices, and causes the construction phase to exceed $t_\text{timeout}$. Increasing this value successfully mitigates the issue. However, it also affects the total analysis time. The exact same issue applies to SHA3 with $d \geq 3$, causing the Keccak-F function to be inlined, and consequently the construction to exceed $t_\text{timeout}$.

Twofish is a Feistel cipher with a complex round function. The Feistel signature used throughout the analysis supports a round function consisting of up to 8 consecutive arithmetic/logical operations, whereas the complexity of the Twofish round function goes beyond that. Unfortunately, extending the signature beyond 8 consecutive operations severely impacts the running time of our implementation. 

\vspace{-3mm}
\paragraph{False positives}\label{sec:falsepositives}
The AES key schedule is identified as a Feistel network. This is due to the fact that its structure can actually be formulated as one, i.e. each round $L_{i+1} = R_i$, and $R_{i+1} = L_i \oplus F(R_i, K_i)$, where $i$ denotes the round number for some function $F$. This is a perfect example to illustrate that the taxonomical tree of cryptographic primitives is not necessarily clear-cut. Rather, a degree of `fuzziness` exists among different classes.

RC4 and ChaCha, both stream ciphers, are identified as sequential block permutations. Inspection reveals that both implementations keep an internal state of some size $b$. The state is used directly as the cipher's keystream. After the internal state is fully consumed, a new internal state is generated. As such, the structure can be viewed as a special case of a block cipher with a block size of $b$ bytes.

\begin{table*}[!t]
\begin{threeparttable}[b]
\scalebox{0.72}{\begin{tabular}{l c l c c c l}
\textbf{Algorithm} & Type & Description & Reverse- & Cryptanalysis & Original & Target signature\\
			   & &			& engineered &			   &	source	   &				\\
\hline\\[-0.3cm]

\textbf{CRYPTO1}      &
\footnotesize{Stream} &
Cipher used in the Mifare Classic family of RFID tags. &
\cite{nohl2008reverse, garcia2008dismantling} &
\cite{de2008practical,garcia2008dismantling,garcia2009wirelessly,courtois2009dark,meijer2015ciphertext} &
\tablefootnote{\tiny\url{https://github.com/nfc-tools/mfcuk/blob/master/src/crypto1.c}} 
&
\cellcolor{green!25} \checkmark~(N)LFSR\tnote{1}\\

\textbf{HITAG2}       &
\footnotesize{Stream} &
Cipher used in vehicle immobilizers. &
\cite{wiener2007hitag2} &
\cite{courtois2009practical,soos2010enhanced,vstembera2011breaking,sun2011cube,verdult2012gone} &
\tablefootnote{\tiny\url{http://cryptolib.com/ciphers/hitag2/}}  &
\cellcolor{green!25} \checkmark~(N)LFSR\tnote{1}\\

\textbf{A5-1}   &
\footnotesize{Stream} &
Provides of over-the-air privacy for communication in GSM. &
\cite{briceno1999pedagogical} &
\cite{biham2000cryptanalysis,biryukov2000real,maximov2004improved} &
\tablefootnote{\tiny\url{https://cryptome.org/gsm-a512.htm}\label{ftn:a5}} &
\cellcolor{green!25} \checkmark~(N)LFSR\tnote{1}\\

\textbf{A5-2}   &
\footnotesize{Stream} &
GSM export cipher. &
\cite{briceno1999pedagogical} &
\cite{goldberg1999real} &
\footref{ftn:a5} &
\cellcolor{green!25} \checkmark~(N)LFSR\tnote{1}\\

\textbf{A5-GMR} &
\footnotesize{Stream} &
Cipher used in GMR, a standard for satellite phones. Heavily inspired by A5/2. &
\cite{driessen2012don} &
\cite{driessen2012don,driessen2013experimental} &
\tablefootnote{\tiny\url{https://github.com/marcelmaatkamp/gnuradio-osmocom-gmr/blob/master/src/l1/a5.c}} &
\cellcolor{green!25} \checkmark~(N)LFSR\tnote{1}\\

\textbf{RED PIKE}       &
\footnotesize{Block} &
Classified UK government encryption algorithm. &
\cite{cypherpunkredpike} &
- &
\tablefootnote{\tiny\url{https://en.wikipedia.org/wiki/Red_Pike_(cipher)}} &
\cellcolor{red!25} \xmark~Feistel cipher\\

\textbf{COMP128}        &
\footnotesize{Hash} &
Family of algorithms used for session key and MAC generation in GSM. &
\cite{bricenoimplementation,tamassecrets} &
\cite{brumley2004a3} &
\tablefootnote{\tiny\url{https://github.com/osmocom/libosmocore/blob/master/src/gsm/comp128.c}} &
\cellcolor{green!25} \checkmark~Block permutation\\

\textbf{KASUMI} &
\footnotesize{Block} &
Feistel cipher used for the confidentiality and integrity of 3G. &
- &
\cite{biham2005related,kim2010related,dunkelman2010practical} &
\tablefootnote{\tiny\url{https://github.com/osmocom/libosmocore/blob/master/src/gsm/kasumi.c}} &
\cellcolor{green!25} \checkmark~Feistel cipher\\

\textbf{MULTI2}        &
\footnotesize{Block} &
A block cipher used for broadcast scrambling in Japan. &
- &
\cite{aumasson2009cryptanalysis} &
\tablefootnote{\tiny\url{https://github.com/OP-TEE/optee_os/blob/master/core/lib/libtomcrypt/src/ciphers/multi2.c}} &
\cellcolor{green!25} \checkmark~Feistel cipher\\

\textbf{DST40}    &
\footnotesize{Block} &
Digital Signature Transponder cipher, often found in vehicle immobilizers. &
\cite{bono2005security} &
\cite{bono2005security} &
\tablefootnote{\tiny\url{https://github.com/jok40/dst40/blob/HEAD/software/dst40test/dst40.c}} &
\cellcolor{green!25} \checkmark~(N)LFSR\\

\textbf{KEELOQ}       &
\footnotesize{Block} &
Block cipher used in remote keyless entry systems and home automation. &
\cite{microchip1998keeloq} &
\cite{bogdanov2007cryptanalysis,courtois2008algebraic,biham2008steal,eisenbarth2008power} &
\tablefootnote{\tiny\url{https://github.com/hadipourh/KeeLoq}}\textsuperscript{,}\tablefootnote{\tiny\url{http://cryptolib.com/ciphers/keeloq/}} &
\cellcolor{green!25} \checkmark~(N)LFSR\\
\end{tabular}
}

\begin{tablenotes}
\item[1] \footnotesize Positive match for $d \geq 4$
\end{tablenotes}

\caption {Analysis result for proprietary samples}
\label{tab:proptable}

\end{threeparttable}
\vspace{-0.33cm}
\end{table*}

\begin{table}[htb]
\begin{threeparttable}[b]
\centering
\scalebox{0.70}{\begin{tabular}{l c c c c}
\multicolumn{2}{l}{\textbf{Algorithm}}\\
signature & \multicolumn{1}{c}{CWM0576} & \multicolumn{1}{c}{CWX0470} & \multicolumn{1}{c}{M340} & \multicolumn{1}{c}{VW}\\
\hline\\[-0.3cm]
size & \multicolumn{1}{c}{1,717 KB} & \multicolumn{1}{c}{1,344 KB} & \multicolumn{1}{c}{4,133 KB} & \multicolumn{1}{c}{512 KB}\\
\parbox{0.1cm}{analysis~time} & \multicolumn{1}{c}{88m14s} & \multicolumn{1}{c}{45m53s} & \multicolumn{1}{c}{83m11s} & \multicolumn{1}{c}{11m45s}\\
\hline\\[-0.3cm]
\multicolumn{2}{l}{\textbf{DES}}\\
feistel & \cellcolor{green!25} \checkmark~Match  & \cellcolor{green!25} \checkmark~Match & \cellcolor{gray!25} N/A       & \cellcolor{gray!25} N/A\\
\multicolumn{2}{l}{\textbf{AES}}\\
aes     & \cellcolor{green!25} \checkmark~Match & \cellcolor{gray!25} N/A       & \cellcolor{gray!25} N/A       & \cellcolor{gray!25} N/A\\
bl.perm.  & \cellcolor{green!25} \checkmark~Match        & \cellcolor{gray!25} N/A       & \cellcolor{gray!25} N/A       & \cellcolor{gray!25} N/A\\
\multicolumn{2}{l}{\textbf{MD5}}\\
md5     & \cellcolor{green!25} \checkmark~Match       & \cellcolor{green!25} \checkmark~Match       & \cellcolor{green!25} \checkmark~Match & \cellcolor{gray!25} N/A\\
bl.perm.  & \cellcolor{green!25} \checkmark~Match & \cellcolor{green!25} \checkmark~Match & \cellcolor{green!25} \checkmark~Match   & \cellcolor{gray!25} N/A\\
\multicolumn{2}{l}{\textbf{MEGAMOS}}\\
(n)lfsr & \cellcolor{gray!25} N/A       & \cellcolor{gray!25} N/A       & \cellcolor{gray!25} N/A       & \cellcolor{red!25} \xmark~No match
\end{tabular}
}

\caption{Analysis result for various firmware images}
 \label{tab:fwtable}
\end{threeparttable}
\vspace{-0.33cm}
\end{table}

Finally, CAST, ARIA and SM4 are all misidentified as AES. This is due to the fact that for all three primitives, either the algorithm itself, or its key schedule, is implemented by means of lookup tables in a fashion similar to that of AES. Ultimately, the transform completely depends on these tables, rather than information flows.


\subsection{Performance on proprietary algorithms}
Next, we turn our attention to various proprietary algorithms. Most algorithms were originally confidential, but have been leaked to the public or reverse engineered. As such, source code for all samples is publicly available. Due to uncertainty over the legality of redistribution, we point to the original sources for reference.
Table~\ref{tab:proptable} depicts the analysis results these algorithms. A description, the analysis result, and other relevant information is condensed into a single table due to space restrictions. All signatures target a generic class of primitives and none were pre-constructed to fit a particular sample. All algorithms are successfully identified, with the exception of Red Pike. Similar to RC2 from Section~\ref{subsubsec:discussion}, Red Pike uses addition instead of exclusive-or, and therefore not identified as a Feistel cipher.

Finally, the test set of representative real-world firmwares consists of images for the Emerson ControlWave Micro RTU, Emerson ControlWave XFC flow computer, Schneider Electric M340 PLC and Volkswagen IPC. The size, nature and complexity of these images ensure test-set realism.
Table~\ref{tab:fwtable} depicts the analysis result for all the firmwares. To the best of our knowledge, the table covers all cryptographic algorithms present in the sample set of firmware images. The images are `flat' binaries and hence symbol names are absent. The results show that all the cryptographic primitives were identified, except for the Megamos cipher. Verdult et al.~\cite{verdult2015dismantling} revealed that the Megamos cipher contains an NLFSR, and thus, the analysis should point this out. Further examination reveals that the non-linear feedback function is implemented as a subroutine, and the shift register is updated depending on its return value via an if-statement. 
This is a direct violation of the implicit flow limitation inherent to DFG-based approaches discussed in Section~\ref{sec:scope}.

\section{Conclusions}

Despite the ubiquitous availability of royalty-free, publicly documented, and peer-reviewed cryptographic primitives and implementations, proprietary alternatives have persisted across many industry verticals, especially in embedded systems.
Due to the undocumented and proprietary nature of said primitives, subjecting them to security analysis often requires locating and classifying them in often very large binary images, which is a time-consuming, labor-intensive effort. 

In order to overcome this obstacle in an automated fashion, a solution should have the capability of identifying as-of-yet unknown cryptographic algorithms, support large, real-world firmware binaries, and not depend on peripheral emulation. As of yet, no prior work exists that satisfies these criteria.

Our novel approach combines DFG isomorphism with symbolic execution, and introduces a specialized DSL in order to enable identification of unknown proprietary cryptographic algorithms falling within well-defined taxonomical classes. The approach is the first of its kind, is architecture and platform agnostic, and performs well in terms of both accuracy and running time on real-world binary firmware images.
\vspace{-3mm}
\paragraph{Future work}
DFGs do not allow for the expression of code flow information. Potentially valuable indicators, such as whether two nodes originate from the same execution address, hinting to a round function, are therefore lost. We leave the incorporation of code flow information for future work.


\bibliographystyle{plain}
\bibliography{bibliography}

\appendix

\section{Path Oracle Policy -- an example} 
\label{appendix:oraclePolicy}

\begin{figure}[h!tb]
\includegraphics[width=\columnwidth]{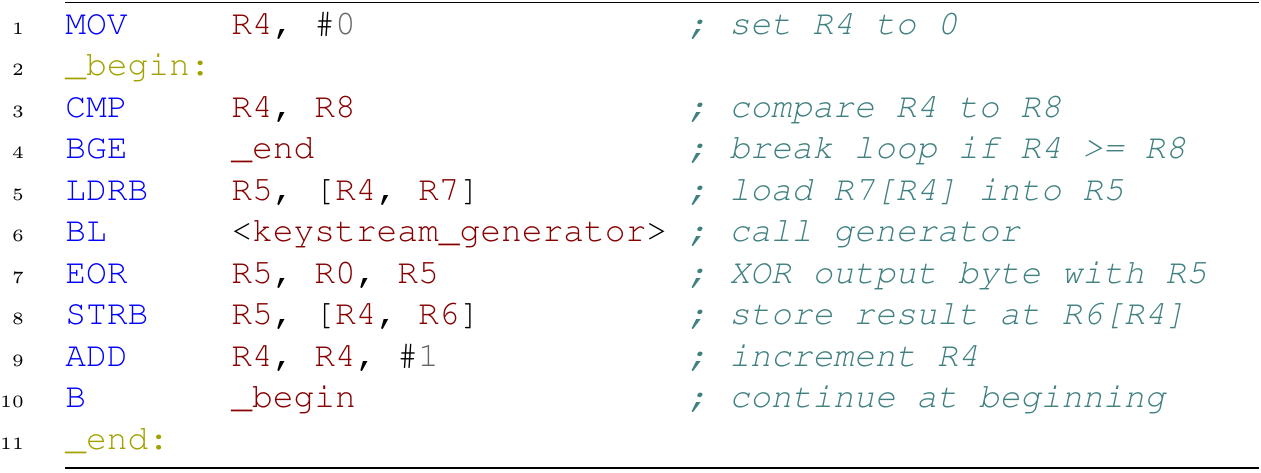}
\caption{Example stream cipher ARM assembly snippet}
\label{fig:snippet}
\end{figure}

\noindent Suppose the graph construction is run on the example ARM assembly snippet given in Figure~\ref{fig:snippet}. We start with $\St = (G, P, B)$, with $P = \textbf{true}$. Line $4$ contains conditional instruction \emph{Branch Greater/Equal (\texttt{BGE})}. During the first visit of this instruction, we have $i=0$, $P = \textbf{true}$, and $c = (\texttt{R8} \leq 0)$. Since the value of \texttt{R8} is unknown, $c$ is underdetermined. The path oracle policy prescribes \texttt{TAKE\_BOTH}. Thus, we get $P = (\texttt{R8} \leq 0)$, $B_4[0] = \textbf{true}$, and $\St' = (G', P', B')$, with $P' = (\texttt{R8} > 0)$ and $B'_4[0] = \textbf{false}$.
For state $\St$, the instruction is evaluated, and thus the construction continues on line $11$, and hence terminates. For $\St'$, the instruction is skipped, thereby visiting the body of the loop. Eventually, $\St'$ revisits the instruction at line $4$. This time we have $c = (\texttt{R8} \leq 1)$, $i=1$, $P' = (\texttt{R8} > 0)$ and $B'_4[0] = \textbf{false}$. Since $P' \land c$ is underdetermined, we query the path oracle, and obtain \texttt{TAKE\_FALSE}, causing another visit of the loop's body.
Finally, at $i=n$, we get $c = \texttt{R8} \leq n$ and $P' = (\texttt{R8} > n-1)$. We obtain \texttt{TAKE\_TRUE} from the path oracle. Thus, the construction terminates.
We obtain two graphs; one corresponding to predicate $\texttt{R8} \leq 0$, and another corresponding to $\texttt{R8} = n$. The latter describes $n$ iterations of the algorithm, exactly conforming to our goal. The former describes zero iterations, and thus, contains a negligible amount of nodes. Therefore, we accept the small amount of overhead this graph induces during later stages of the analysis.

\section{Miscellaneous rewrite rules}\label{appendix:misctrans}
Besides the rewrite rules already described, we apply additional miscellaneous rules. They
were conceived through continuous application of our framework to code fragments from various sources, and subsequent stumbling upon variations between the processed result generated from supposedly semantically equivalent code. We highlight these rules below. Different compilers have different optimization strategies. As such, some finetuning of these rules may be necessary when analyzing code produced by a vastly different compiler than those already accounted for.

There are various means of doubling the value of an arbitrary expression $x$. For example, $\texttt{MULT}(x, 2)$, but also $\texttt{ADD}(x,x)$ and $x\text{\textless{}\textless{}}1$. We represent all variants by $\texttt{MULT}(x, 2)$.

\begin{figure*}[!b]
\centering
 \begin{subfigure}[t]{0.45\textwidth}
 \centering
  \includegraphics[width=0.9\columnwidth]{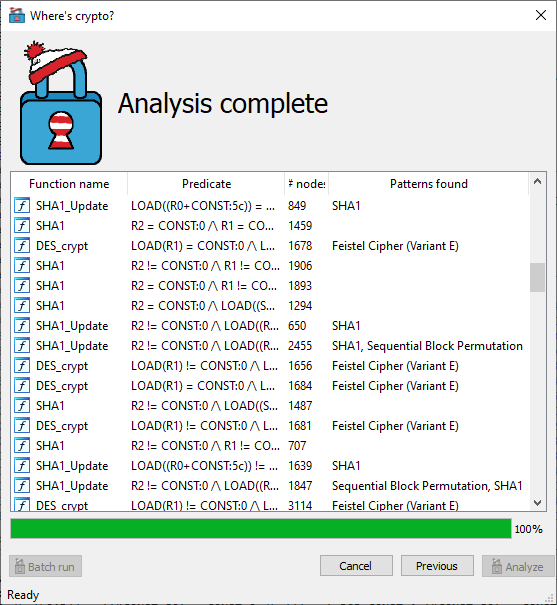}
  \caption{Sample analysis report}
 \end{subfigure}
 \begin{subfigure}[t]{0.5\textwidth}
 \centering
  \includegraphics[width=0.9\columnwidth]{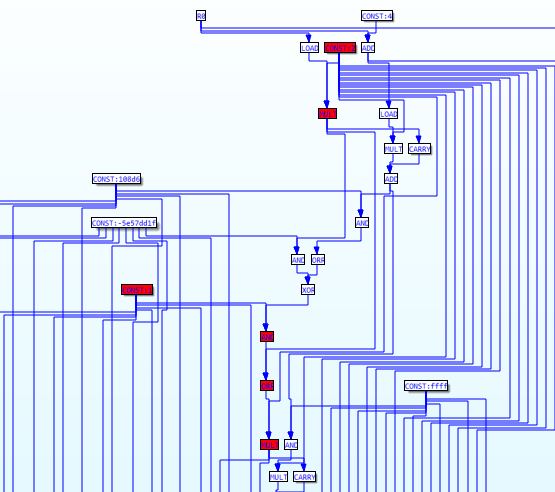}
  \caption{DFG plot generated from assembly, highlighting an LFSR}
 \end{subfigure}
\caption{An impression of the implementation of our framework}
\label{fig:poc} 
\end{figure*}

\vspace*{-0.25cm}\begin{center}
\scalebox{0.8}{
\begin{tikzpicture}[auto]
\node[data] (r1) {\footnotesize R1};
\path (r1)+(0.375, -0.8) node[op,circle,inner sep=0.1cm] (add) {+};

\path[->] (r1) edge[bend left=15] node[] {} (add);
\path[->] (r1) edge[bend right=15] node[] {} (add);

\path (add)+(1,0.4) node [] (arrow1) {};
\path (arrow1.center)+(1,0) node[] (arrow2) {};
\path[->,line width=3pt] (arrow1.center) edge [] node[] {} (arrow2.center);

\path (arrow2)+(1,0.4) node [data] (r1p) {\footnotesize R1};
\path (r1p)+(0.75,0) node [data] (2) {\footnotesize \texttt{2}};
\path (r1p)+(0.375,-0.8) node [data,circle,inner sep=0.05cm] (mult) {\scriptsize \textsc{mult}};

\path[->] (r1p) edge[] node[] {} (mult);
\path[->] (2) edge[] node[] {} (mult);
\end{tikzpicture}
}
\end{center}

\noindent Furthermore, suppose we have an arbitrary expression $x$, and constants $c_1$ and $c_2$. Then, the results of $\texttt{AND}(x\text{~\textgreater{}\textgreater{}~}c_1, c_2)$ and $\texttt{AND}(\texttt{ROTATE}(x,c_1),c_2)$, are equivalent if $c_2 < 2^{32 - c_1}$ and $c_1 < 32$, for a 32-bit architecture. This equivalence is sometimes exploited by compilers. 
In such a scenario, we represent both variants by $\texttt{AND}(x\text{~\textgreater{}\textgreater{}~}c_1, c_2)$.

\vspace*{-0.25cm}\begin{center}
\scalebox{0.8}{
\begin{tikzpicture}[auto]
\node[data] (r4) {\footnotesize R4};
\path (r4)+(0.375, -0.8) node[op,circle,inner sep=0.05cm] (rrx) {\footnotesize \textsc{rot}};
\path (r4)+(0.75,0) node[data] (8) {\footnotesize \texttt{8}};
\path (rrx)+(0.38,-0.8) node[op,circle,inner sep=0.05cm] (and) {\footnotesize \textsc{and}};
\path (rrx)+(0.8,0) node[data] (const) {\footnotesize \texttt{0xff}};

\path[->] (r4) edge[] node[] {} (rrx);
\path[->] (8) edge[] node[] {} (rrx);
\path[->] (rrx) edge[] node[] {} (and);
\path[->] (const) edge[] node[] {} (and);

\path (const)+(1,0) node [] (arrow1) {};
\path (arrow1.center)+(1,0) node[] (arrow2) {};
\path[->,line width=3pt] (arrow1.center) edge [] node[] {} (arrow2.center);

\path (arrow2)+(0.8,0.8) node[data] (r4p) {\footnotesize R4};
\path (r4p)+(0.375, -0.8) node[op,circle,inner sep=0.1cm] (rsh) {\footnotesize \textsc{\textgreater{}\textgreater{}}};
\path (r4p)+(0.75,0) node[data] (8p) {\footnotesize \texttt{8}};
\path (rsh)+(0.38,-0.8) node[op,circle,inner sep=0.05cm] (andp) {\footnotesize \textsc{and}};
\path (rsh)+(0.8,0) node[data] (constp) {\footnotesize \texttt{0xff}};

\path[->] (r4p) edge[] node[] {} (rsh);
\path[->] (8p) edge[] node[] {} (rsh);
\path[->] (rsh) edge[] node[] {} (andp);
\path[->] (constp) edge[] node[] {} (andp);
\end{tikzpicture}
}
\end{center}

\vspace*{-0.35cm}\noindent Lastly, we distribute multiplications over additions.

\vspace*{-0.25cm}\begin{center}
\scalebox{0.8}{
\begin{tikzpicture}[auto]
\node[data] (r3) {\footnotesize R3};
\path (r3)+(0.375, -0.8) node[op,circle,inner sep=0.1cm] (add1) {+};
\path (r3)+(0.75,0) node[data] (4) {\footnotesize \texttt{4}};
\path (add1)+(0.375,-0.8) node[op,circle,inner sep=0.05cm] (mult) {\scriptsize \textsc{mult}};
\path (add1)+(0.75,0) node[data] (2) {\footnotesize \texttt{2}};

\path[->] (r3) edge[] node[] {} (add1);
\path[->] (4) edge[] node[] {} (add1);
\path[->] (add1) edge[] node[] {} (mult);
\path[->] (2) edge[] node[] {} (mult);

\path (const)+(1,0) node [] (arrow1) {};
\path (arrow1.center)+(1,0) node[] (arrow2) {};
\path[->,line width=3pt] (arrow1.center) edge [] node[] {} (arrow2.center);

\path (arrow2)+(0.8,0.8) node[data] (r3p) {\footnotesize R3};
\path (r3p)+(0.375, -0.8) node[op,circle,inner sep=0.05cm] (multp) {\scriptsize \textsc{mult}};
\path (r3p)+(0.75,0) node[data] (2p) {\footnotesize \texttt{2}};
\path (multp)+(0.375,-0.8) node[op,circle,inner sep=0.1cm] (add2) {+};
\path (multp)+(0.75,0) node[data] (8) {\footnotesize \texttt{8}};

\path[->] (r3p) edge[] node[] {} (multp);
\path[->] (2p) edge[] node[] {} (multp);
\path[->] (multp) edge[] node[] {} (add2);
\path[->] (8) edge[] node[] {} (add2);

\end{tikzpicture}
}
\end{center}

\section{Sample signature definition}\label{appendix:signature}
Given below is a snippet taken from the (N)LFSR signature bundled with our implementation of the framework.
\begin{verbatim}
IDENTIFIER (Non-)Linear feedback shift register

VARIANT A
...

VARIANT C
TRANSIENT layer0:OR(AND(1,OPAQUE),OPAQUE<<1);
TRANSIENT layer1:OR(AND(1,OPAQUE),layer0<<1);
TRANSIENT layer2:OR(AND(1,OPAQUE),layer1<<1);
layer3:OR(AND(1,OPAQUE),layer2<<1);
\end{verbatim}
An (N)LFSR can be implemented in software by various means. For e.g., rather than shifting to the left, the register may shift to the right instead, placing the new bit generated by the feedback function at the most significant position. Furthermore, a left shift of one bit is equivalent to a multiplication with $2$, or an addition with itself. Also, the newly generated bit is normally appended to the register through a bitwise or. However, directly after a shift operation is performed, the vacant bit is always $0$. Hence, using an exclusive-or, or even an addition instead is equivalent. Due to these naturally occurring variations, several variants of the signature are defined. In this example, we take a closer look at variant \texttt{C}, which is the most typical.

As discussed in Section~\ref{sec:evaluation}, we take $n=4$. Hence, the signature should capture $4$ iterations of an (N)LFSR. Each iteration, the register shifts one position to the left, and a new bit is generated by an unknown feedback function $L$ and placed at position $0$ by means of a bitwise or. Each round refers to the previous through its label, i.e. \texttt{layer[0-3]}. The initial state is the result of an unknown initialization function, hence represented by \texttt{OPAQUE}. $L$ is also unknown, and thus represented by \texttt{OPAQUE}. However, it is known to produce a single output bit. Therefore, it can be assumed that the single bit is obtained through a bitwise-and with $1$, before being inserted into the register by means of a bitwise or. Finally, all iterations except the last form intermediate steps towards the register's final value. By specifying the \textsc{transient} keyword, we allow the broker to translate the intermediate steps into a more optimized DFG representation.

\section{Implementation}

An implementation of the framework described in this paper is available for download\footnote{\url{https://github.com/wheres-crypto/wheres-crypto}}. It comes in the form of a plug-in for the popular IDA disassembler. At the time of writing, support is implemented for 32 bits ARM binaries. The architecture is modular, and expanding support to other architectures is relatively straightforward. Figure~\ref{fig:poc} shows a sample analysis report, and a DFG plot generated by our implementation.

\end{document}